\newtheorem{thm}{Theorem}
\newtheorem{lem}[thm]{Lemma}
\newtheorem{prop}[thm]{Proposition}
\def\BC{\mathbb{C}}
\def\CH{\mathcal{H}}
\def\CL{\mathcal{L}}
\def\CO{\mathcal{O}}
\def\BP{\mathbb{P}}
\def\BZ{\mathbb{Z}}
\newcommand{\AC}[1]{
{\color{blue}#1}
}
\begin{document}
\title{Finite speed of quantum scrambling with long range interactions}

\author{Chi-Fang Chen}
\affiliation{Department of Physics, Stanford University, Stanford CA 94305, USA}
\affiliation{Department of Physics, California Institute of Technology, Pasadena CA 91125, USA}

\author{Andrew Lucas}
\email{andrew.j.lucas@colorado.edu}
\affiliation{Department of Physics, Stanford University, Stanford CA 94305, USA}
\affiliation{Department of Physics, University of Colorado, Boulder CO 80309, USA}

\begin{abstract}
In a locally interacting many-body system, two isolated qubits, separated by a large distance $r$, become correlated and entangled with each other at a time $t \ge r/v$.  This finite speed $v$ of quantum information scrambling limits quantum information processing, thermalization and even equilibrium correlations. Yet most experimental systems contain long range power law interactions -- qubits separated by $r$ have potential energy $V(r)\propto r^{-\alpha}$.  Examples include the long range Coulomb interactions in plasma ($\alpha=1$) and dipolar interactions between spins ($\alpha=3$).     In one spatial dimension, we prove that the speed of quantum scrambling remains finite  for sufficiently large $\alpha$.    This result parametrically improves previous bounds, compares favorably with recent numerical simulations, and can be realized in quantum simulators with dipolar interactions.  Our new mathematical methods lead to improved algorithms for classically simulating quantum systems, and improve bounds on environmental decoherence in experimental quantum information processors.
\end{abstract}

\date{\today}

\maketitle

Almost five decades ago, Lieb and Robinson proved that spatial locality implies the ballistic propagation of quantum information \cite{liebrobinson}.   Intuitively defining a ``scrambling time" $t_{\mathrm{s}}(r)$ by the time at which an initially isolated qubit can significantly entangle with another a distance $r$ away, locality implies that $t_{\mathrm{s}}(r)\gtrsim r$.   This result has deep implications in physics.  Practical tasks such as information processing \cite{landahl} are possible due to a lack of rapid decoherence with a noisy environment, thermalization occurs locally \cite{kaufman} and equilibrium correlation functions fall off sufficiently rapidly \cite{hastings}.  If quantum information can only propagate with a finite speed, a classical computer can efficiently approximate early time quantum dynamics \cite{haah}. Despite the exponentially large Hilbert space in many-body quantum systems, quantum information processors with short-range interactions cannot become entangled with an infinite environment arbitrarily quickly \cite{eisert,bravyi}.  Lastly, emergent spacetime locality arising from microscopic quantum mechanics without manifest relativistic invariance may play a crucial role in understanding quantum gravity through the holographic correspondence \cite{maldacena}. 

However, the Lieb-Robinson theorem is not useful for a typical quantum information processor.  A qubit in an experimental device is usually a spin or atomic degree of freedom, or Josephson junction.  Such objects generically interact with long range interactions, and until now, whether locality of quantum scrambling necessarily persists in the presence of long range interactions has remained unclear.  In 2005, Hastings and Koma used the canonical Lieb-Robinson theorem to prove that when $\alpha>d$,  $t_{\mathrm{s}}(r) \gtrsim \log r$ \cite{hastings}; more recently, this bound has been improved for $\alpha>2d$ to $t_{\mathrm{s}}(r) \gtrsim r^{(\alpha-2d)/(\alpha-d)}$ \cite{maryland14, maryland18, yao, maryland19}.   If such bounds were tight, then insulating a quantum processor from its environment would be absolutely crucial.   Yet numerical simulations cast into doubt the tightness of these formal bounds: two groups have recently shown that $t_{\mathrm{s}}\gtrsim r$ in one dimensional models with $\alpha\gtrsim 1.5$ \cite{chen} or even $\alpha>1$ \cite{luitz}, depending on microscopic details.  

In this letter, we prove that $t_{\mathrm{s}}(r)\gtrsim r$ whenever $\alpha>3$, in all one dimensional models with power law interactions.  Our dramatic improvement over existing results is made possible by new mathematics \cite{us}: identities for unitary time evolution expanded as a sum over flexibly chosen equivalence classes of sequences of couplings. 

Our work has clear physical consequences.  Scrambling in dipolar spin chains \cite{vernac, mark} is hardly faster than in a spin chain with nearest neighbor interactions; hence, it should be far more efficient to simulate numerically \cite{maryland18, haah}.  Nor does decoherence seriously limit the quantum information processing capabilities of a nuclear spin chain, no matter how large the environment.  Quantum thermalization nearly proceeds as if interactions were local, as in typical theoretical models of scrambling \cite{nahum, tibor}. 

\emph{Formal Statement of Theorem.---}  We now formally restate our theorem in a mathematically precise language.  For simplicity, we will assume a one dimensional chain of qubits (two-level systems); the generalization to all finite-dimensional quantum models in one dimension is contained in the Supplementary Material \cite{suppmat}.  Thus, the Hilbert space is given by \begin{equation}
\mathcal{H} = \bigotimes_{i \in \mathbb{Z}} \CH_i = \bigotimes_{i \in \mathbb{Z}} \BC^2.  \label{eq:tensorproduct}
\end{equation}
Even though $\mathcal{H}$ is (uncountably) infinite dimensional, our bound on scrambling will reduce to a calculation on an finite segment of the chain.

The set of Hermitian operators on $\mathcal{H}$ forms a real vector space $\mathcal{B}$.  Let the $\mathrm{U}(2)$ generators $\lbrace I, \sigma^x,\sigma^y, \sigma^z \rbrace $ be our complete basis of Hermitian operators on $\CH_i$; $\mathcal{B}$ is spanned by tensor products $ \{I, \sigma_i, \sigma_i\sigma_j, \cdots \}$
Here and below, we can use a ``bra-ket" notation with parentheses to emphasize that Hermitian operators on $\mathcal{H}$ are vectors in $\mathcal{B}$.  We define $\lVert A\rVert$ as the maximal eigenvalue of $A$, the conventional operator norm  \cite{liebrobinson}.

We consider 2-local Hamiltonians: i.e., those which may be expressed as a sum of terms which act on either a single site, or on two sites: \begin{equation}
H = \sum_{i \in \mathbb{Z}} H_i + \sum_{i<j} H_{ij}. \label{eq:H}
\end{equation}
 We define the exponent $\alpha$ of long range interactions by demanding that \begin{equation}
\lVert H_{ij}\rVert \le \frac{h}{|i-j|^\alpha}. \label{eq:hijbound}
\end{equation}

Let $i<j$ be integers.  We define the scrambling time $t_{\mathrm{s}}^\delta(r)$ to be the largest time such that 
\begin{equation}
\sup_{A_{\le i}, B_{\ge j}}\frac{\lVert [A_{\le i}(t'), B_{\ge j}] \rVert}{\lVert A_{\le i}\rVert \lVert B_{\ge j}\rVert} < \delta, \text{ for  } 0<|t'|<t_{\mathrm{s}}^\delta(|i-j|). \label{eq:scrambling}
\end{equation} 
where $A_{\le i}$ denotes a bounded operator that acts trivially on any site $k>i$ and $B_{\ge j}$ acts trivially on any site $k<j$.  While $A_{\le i}$ can act non-trivially on an infinite number of sites, we demand $B_{\ge j}$ acts non-trivially only on a finite number of sites \cite{suppmat}. Lastly, the operator $A_{\le i}(t^\prime) := \mathrm{e}^{\mathrm{i}Ht^\prime} A_{\le i} \mathrm{e}^{-\mathrm{i}Ht^\prime}$ denotes a Heisenberg time-evolved operator.  
The definition of scrambling given in (\ref{eq:scrambling})  bounds the growth in observable correlation functions, and the generation of entanglement between distant qubits \cite{eisert, bravyi}.  We are now ready to state our main result:

\begin{thm}  For every $0<\delta<2$, there exists a constant $0<K_{\alpha} <\infty$ for which \begin{equation}
t_{\mathrm{s}}^\delta(r) \ge K_\alpha \times \left\lbrace\begin{array}{ll} r^{\alpha-2} &\ 2<\alpha<3 \\ r(\log r)^{-2} &\ \alpha = 3 \\ r &\ \alpha>3 \end{array}\right.. \label{eq:main}
\end{equation}
\end{thm}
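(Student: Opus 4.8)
\emph{Proof strategy.---} The plan is to bound the commutator norm in (\ref{eq:scrambling}) by expanding the Heisenberg evolution $A_{\le i}(t)$ into a sum over ordered sequences of the two-local couplings $H_{ab}$, and then to show that any sequence contributing to $[A_{\le i}(t), B_{\ge j}]$ must chain together couplings whose jumps bridge the gap $r = |i-j|$ between the two regions. I would first reduce to a finite segment: since $B_{\ge j}$ acts on finitely many sites and the weight of each coupling is controlled by (\ref{eq:hijbound}), couplings reaching far beyond $[i,j]$ are power-law suppressed, so it suffices to estimate the amplitude transferred across $[i,j]$ by the single-site ``hopping'' matrix $J_{ab} = \lVert H_{ab}\rVert \le h|a-b|^{-\alpha}$ (the single-site terms $H_a$ only rotate the operator locally and cannot advance the front).

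Before bounding all processes, it helps to identify which one sets the answer. Consider the operator ``front'' at distance $R$. To push it from $R$ to $2R$, some site in the occupied region of size $R$ must hop a distance of order $R$ into the fresh region; summing $J_{ab}$ over the $\sim R^2$ such source--target pairs gives a rate of order $h R^{2-\alpha}$, so the time to double the front is of order $h^{-1}R^{\alpha-2}$. For $\alpha>2$ the total is dominated by the last doubling, so reaching distance $r$ costs a time of order $r^{\alpha-2}$; this exceeds the $\sim r$ of ordinary nearest-neighbour spreading exactly when $\alpha>3$, which explains the crossover in (\ref{eq:main}) and the marginal, logarithmically corrected behaviour at $\alpha=3$. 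The rigorous task is to show that no cleverer chaining of long jumps beats this.

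To turn the heuristic into a bound, I would organize the sequence sum not term by term---which reproduces the lossy matrix exponential $\mathrm{e}^{ctJ}$ underlying the weaker previously known estimates---but by grouping sequences into equivalence classes and resumming within each class. A naive path sum lets amplitude pile up by repeatedly revisiting the occupied region, whereas unitarity forbids this; choosing the equivalence relation so that the redundant local rearrangements are summed first, and bounded by the conserved total operator weight, before accounting for the irreducible long jumps, removes the overcounting. Concretely I would introduce position-dependent weights that grow with the distance into the left region; the weighted hopping then converges precisely when the first moment $\sum_b |a-b|\,J_{ab}$ is finite, i.e. for $\alpha>2$, and I would show that the generating series over equivalence classes converges whenever $t$ lies below the stated light-cone time, at which point $\lVert[A_{\le i}(t),B_{\ge j}]\rVert$ stays below $\delta\lVert A_{\le i}\rVert\lVert B_{\ge j}\rVert$.

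The main obstacle is precisely this resummation: proving that the equivalence classes exhaust the contributing sequences while each class's internal sum is controlled by the conserved weight rather than the divergent term-by-term count. The delicate point is the long jumps, which can leap over arbitrarily large already-explored regions, so one must verify that they cannot collude to advance the front faster than one doubling per time $R^{\alpha-2}$. I expect the borderline exponent $\alpha=3$ to be hardest: there the weighted jump sum is marginally (logarithmically) divergent, and extracting the sharp $(\log r)^{-2}$ factor rather than a cruder power requires tracking the subleading logarithmic growth of the front through the iterated estimate.
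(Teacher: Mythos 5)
Your heuristic is exactly the right physical picture (front-doubling cost $\sim R^{\alpha-2}$ per octave, crossover against nearest-neighbour spreading at $\alpha=3$), and your high-level strategy --- expand $\mathrm{e}^{\mathcal{L}t}|A_1)$ as a sum over coupling sequences, group them into equivalence classes, and resum each class so that unitarity kills the ``redundant local rearrangements'' --- is precisely the strategy of the paper. But the proposal stops at the point where the actual proof begins, and you say so yourself: ``the main obstacle is precisely this resummation.'' The paper's solution to that obstacle rests on three concrete ingredients that are absent from your outline. First, a dyadic multi-scale decomposition: every two-body coupling $\mathcal{L}_{mn}$ is assigned a scale $q\approx\lfloor\log_2|m-n|\rfloor$ and a block $(q,k)$, with the block norms obeying $\lVert\mathcal{L}_{(q,k)}\rVert\lesssim 2^{-q(\alpha-2)}$. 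Second, a pigeonhole lemma over the $\sim\log_2 r$ scales: any sequence that reaches site $R$ must contain, at some single scale $q$, a ``long'' forward subsequence of $N_q\sim 2^{-q}R/M$ couplings with strictly increasing blocks (otherwise the total forward displacement $\sum_q 2^q\ell(\beta^q)<R/2$, a contradiction). Third, an inclusion-exclusion over the subset $Z$ of scales carrying long subsequences, combined with an interaction-picture (Schwinger--Karplus-type) identity that rewrites each class as time-ordered integrals in which every coupling \emph{not} on the designated long subsequence is absorbed into norm-preserving factors $\mathrm{e}^{\mathcal{L}^\beta_p s}$. Only then does the bound reduce to counting: $\binom{2^{1-q}R}{N_q}(t\,2^{-q(\alpha-2)})^{N_q}/N_q!$ summed over scales, with $N_q$ tuned so all scales contribute comparably, which is where the three regimes $\alpha>3$, $\alpha=3$, $2<\alpha<3$ actually emerge.

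The one concrete mechanism you do propose in place of this --- position-dependent weights with convergence governed by the first moment $\sum_b|a-b|J_{ab}<\infty$, i.e.\ $\alpha>2$ --- would not deliver the theorem and is a warning sign rather than a route forward. A first-moment (weighted Lieb--Robinson) criterion is insensitive to the distinction between $2<\alpha<3$ and $\alpha>3$, yet the theorem's content is exactly that the linear cone is only provable for $\alpha>3$ for general Hamiltonians, with the bound degrading to $r^{\alpha-2}$ below that; an argument whose natural threshold is $\alpha>2$ either proves too much (contradicting the sublinear regime the theorem itself allows for $2<\alpha<3$) or, once corrected, collapses back to the weaker $t_{\mathrm{s}}\gtrsim r^{(\alpha-2)/(\alpha-1)}$-type bounds of the prior literature. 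So while your diagnosis of the problem and the shape of the answer are both right, the proof as proposed has a genuine gap: the equivalence classes are never constructed, and the construction that works (scales, long forward subsequences, inclusion-exclusion, and the resummation identity) is not the one you sketch.
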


\emph{Sketch of Proof.---} We now outline the proof of Theorem 1; details are found in \cite{suppmat}. For simplicity, we set $i=1$ and $j=r$ in (\ref{eq:scrambling}).   We write $A_1$ and $B_r$ below as shorthand for $A_{\le 1}$ and $B_{\ge r}$.

In the Heisenberg picture of quantum mechanics, operators evolve according to $\partial_t \mathcal{O} = \mathrm{i}[H,\mathcal{O}]$.  Just like the Schr\"odinger equation, this is \emph{linear}: we write $\partial_t |\mathcal{O}) = \mathcal{L}|\mathcal{O})$ where $\mathcal{L}$, commutation with Hamiltonian $H$, generates time translations on the space of operators.   The time evolved operator $|\mathcal{O}(t)) = \mathrm{e}^{\mathcal{L}t}|\mathcal{O})$ is nothing more than a ``rotated" operator of the same norm.
We define the projection $\mathbb{P}_r$ onto the hyperplane $\Sigma_r$ of $\mathcal{B}$ of all operators that act non-trivially on the support of $B_r$. This is a convenient object that bounds scrambling by the evolution of $|A_1)$ into $\Sigma_r$ as a function of time: \begin{equation}
\frac{\lVert [A_1(t),B_r]\rVert}{2\lVert A_1\rVert \lVert B_r\rVert} \le \frac{\lVert \mathbb{P}_r\mathrm{e}^{\mathcal{L}t}|A_1)\rVert }{\lVert A_1\rVert }.
\end{equation}

In the ``canonical" form of the Lieb-Robinson theorem popularized by Hastings and Koma \cite{hastings}, one uses the triangle inequality:  $ \partial_t \lVert [A_1(t),B_r] \rVert \le \lVert [A_1(t), [H,B_r]] \rVert $.  Yet most of the terms on the right hand side of this inequality sum do not contribute to $\lVert[A_1(t),B_r]\rVert$: they correspond to shifts in $A_1(t)$ that cannot grow $\lVert \mathbb{P}_r |A_1)\rVert$.  We emphasize that this holds even though the operator norm $\lVert A_1(t)\rVert$ is not the ``length" of the vector $|A_1(t))$.  

\begin{figure*}
\includegraphics[width=\textwidth]{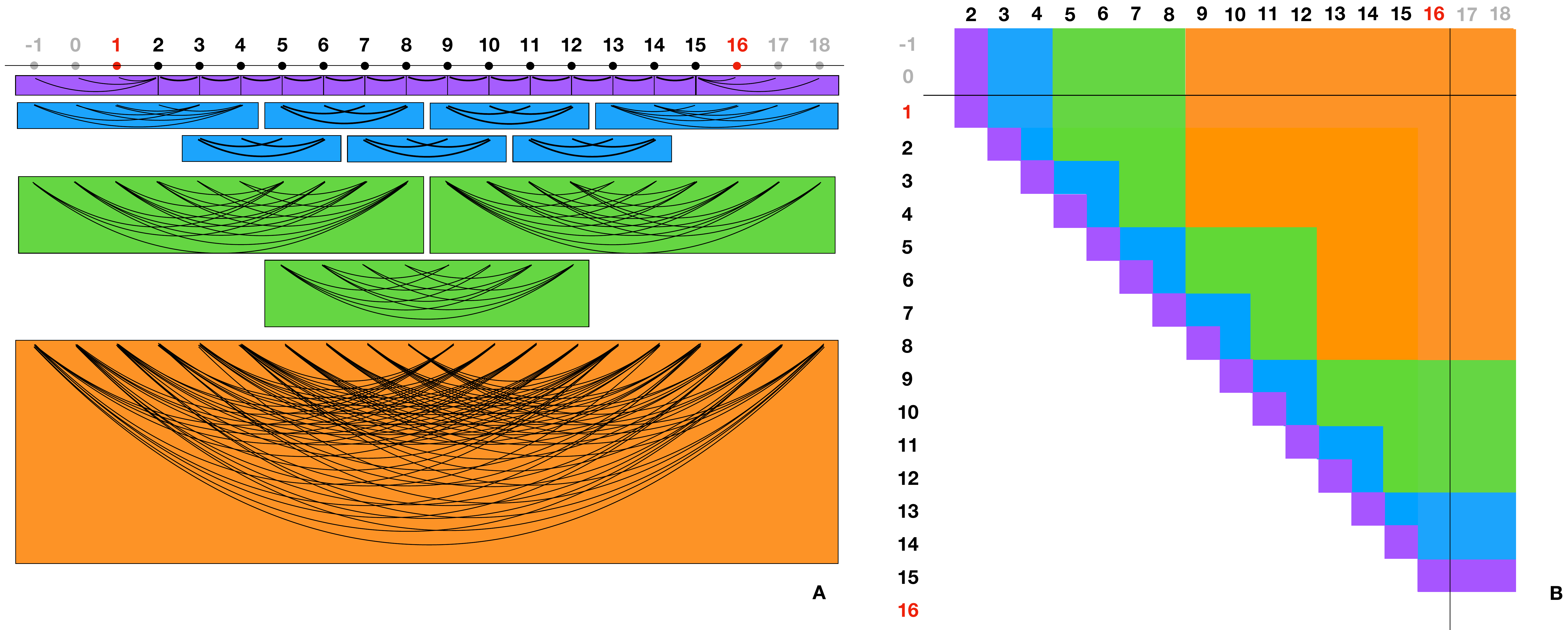}
\caption{Couplings $\mathcal{L}_{ij}$ ($i<j$) can be broken up into the \emph{scale} on which the coupling acts in a unique way.  Intuitively, the scale $q$ of a coupling is approximately $\lfloor \log_2 (j-i) \rfloor$.  Each scale with different values of $q$ is denoted with a different color: from large (orange) to short (purple).  In this example, we study $\lVert[A_1(t),B_{16}]\rVert$, and sites $n$ obeying $n<1$ or $n>16$ are grouped in with these end sites when combining couplings. (A) Each scale/color would form a chain; (B) A more precise presentation. This $L$-shaped tiling ensures that the sum of each $q$-block scales as $2^{-q(\alpha-2)}$ and can be extended to arbitrary large $q$. }  
\label{fig:groupbyscale}
\end{figure*}

Instead, we write \begin{equation}
\mathrm{e}^{\mathcal{L}t} |A_1) = \sum_{n=0}^\infty \frac{t^n}{n!} \sum_{X_1,\ldots, X_n} \mathcal{L}_{X_n}\cdots \mathcal{L}_{X_1}|A_1) \label{eq:Lseq}
\end{equation}
where $X_i$ corresponds to a term in the Hamiltonian: e.g. $\mathcal{L}_{\sigma^x_1 \sigma^x_2} = \mathrm{i}[\sigma^x_1\sigma^x_2,\cdot]$.  $\BP_r\mathcal{L}_{X_n}\cdots\mathcal{L}_{X_1}|A_1)$ is only non-zero if a subsequence of $\mathcal{L}$s form a path from 1 to $r$.  Our main technical development is expanding $\mathbb{P}_r\mathrm{e}^{\mathcal{L}t}|A_1)$ in a controlled way: we classify all sequences with a path from 1 to $r$ by a relatively small number of equivalence classes $\Gamma$. Generalizing the interacting picture, we obtain the following identity:
\begin{align}
\mathbb{P}_r \sum_{\Gamma} \sigma(\Gamma) &\int\limits_0^t \mathrm{d}t_{\ell}\int\limits_0^{t_\ell} \mathrm{d}t_{\ell-1}\cdots \int\limits_0^{t_2} \mathrm{d}t_{1}\mathrm{e}^{\mathcal{L}(t-t_{\ell})} \mathcal{L}^\Gamma_{\ell} \mathrm{e}^{\tilde{\mathcal{L}}^\Gamma_{\ell}(t_{\ell}-t_{\ell-1})} \notag \\
& \cdots\mathrm{e}^{\tilde{\mathcal{L}}^\Gamma_2(t_2-t_1)} \mathcal{L}^\Gamma_1 \mathrm{e}^{\tilde{\mathcal{L}}^\Gamma_1t_1}|A_1) = \mathbb{P}_r\mathrm{e}^{\mathcal{L}t}|A_1). \label{eq:mainequivclass1}
\end{align}
Here $\Gamma$ is a label for $\ell$ non-trivial sequential steps $\mathcal{L}^\Gamma_j$ and the time-ordered integral can intuitively be interpreted as the possible times $t>t_\ell > \cdots > t_1 $ at which ``critical" steps in the sequence of $\mathcal{L}_{X_i}$ occurred.  In fact, the emergence of the integral over $\ell$ ordered times is analogous to the time ordered integrals which arise in time dependent perturbation theory.  According to rules we will shortly state, $\sigma(\Gamma)=\pm 1$ is assigned to avoid double counting so that the terms match up across the equality.
 Applying the triangle inequality to (\ref{eq:mainequivclass1}), and noting $\mathrm{e}^{\tilde{\mathcal{L}}^\Gamma_j t}$ is norm-preserving, which resums superfluous terms in the series expansion (\ref{eq:Lseq}): \begin{equation}
\frac{\lVert \mathbb{P}_r\mathrm{e}^{\mathcal{L}t}|A_1) \rVert }{2\lVert A_1\rVert} \le \sum_\Gamma \frac{t^\ell}{\ell!} \prod_{j=1}^\ell \lVert \mathcal{L}^\Gamma_j\rVert . \label{eq:mainequivclass2}
\end{equation}
where 
 \begin{equation}
\lVert \mathcal{L}^\Gamma_j \rVert := \sup_{\mathcal{O}} \frac{\lVert \mathcal{L}^\Gamma_j  \mathcal{O}\rVert}{\lVert  \mathcal{O}\rVert} 
\label{supernorm}
\end{equation}

The physical content of (\ref{eq:mainequivclass2}) is interpreted as follows.  If we can define an equivalence relation on sequences of couplings such that (\ref{eq:mainequivclass1}) holds, then \emph{only the non-trivial steps} $\mathcal{L}_j^\Gamma$ need to be counted in the commutator bound (\ref{eq:scrambling}).  \emph{Every} other term in the sequence $\tilde{\mathcal{L}}^\Gamma_i$ that shows up in the intermediate unitary evolution does not grow $\lVert \mathbb{P}_r \mathrm{e}^{\mathcal{L}t}|A_1)\rVert$.

In general, the choice of $\Gamma$ is quite flexible.   For the Hamiltonian (\ref{eq:H}), our construction is depicted in Figure~\ref{fig:groupbyscale}.  We start by regrouping all $\mathcal{L}_{mn} = \mathrm{i}[H_{mn},\cdot]$ by their scale $q\approx \lfloor \log_2|m-n|\rfloor$ (if $1\le m,n\le r$; the exact formula is in \cite{suppmat}). We write $H$ as a sum of one dimensional Hamiltonians, each consisting of terms of a given scale.  At scale $q=0$ ($q>0$), these blocks form one dimensionals model of nearest (next nearest) neighbor interactions between \emph{blocks} of sites, see Figure~\ref{fig:groupbyscale}A.  Which couplings are grouped into which blocks is depicted in Figure~\ref{fig:groupbyscale}B.   At scale $q$, we denote the block $(q,k)$ to be the $k^{\mathrm{th}}$ left most block in Figure~\ref{fig:groupbyscale}, starting with $k=0$.   We denote $\mathcal{L}_{(q,k)} = \mathrm{i}[H_{(q,k)},\cdot]$ where \begin{equation}
H_{(q,k)} = \sum_{(i,j) \text{ in block } (q,k)} H_{ij}.
\end{equation}
We now rewrite (\ref{eq:Lseq}) as
 \begin{align}
\BP_r&\mathrm{e}^{\mathcal{L}t} |A_1) = \sum_{n=0}^\infty \frac{t^n}{n!} \times \notag \\
& \sum_{(q_n,k_n),\ldots, (q_1,k_1)}\BP_r \mathcal{L}_{(q_n,k_n)}\cdots \mathcal{L}_{(q_1,k_1)}|A_1) \label{eq:Lmnseq}
\end{align}

The key observation is that any sequence above (\ref{eq:Lmnseq}) that made it to $r$ must traverse forward a distance $\gtrsim r/\log r$ on \emph{at least one of the $\lfloor \log_2 r \rfloor$ scales $q$} (Figure \ref{fig3_longforward}). 
For any sequence of $\mathcal{L}_{(q_i,k_i)}$s in (\ref{eq:Lmnseq}) , we can read off the \textit{q-forward subsequence} $\mathbb{P}_r \cdots \mathcal{L}_{(q,k_N)}\cdots \mathcal{L}_{(q,k_1)}\cdots |A_1)$, $k_{N}>\cdots >k_{1}$ by recursively finding the next $(q,k_j)$ that exceeds the largest $k$ so far. If $N > N_{q} \sim 2^{-q} r/\log_2 r$, then the sequence is `'long".
We organize equivalence classes $\Gamma$ by the non-empty subset of the integers $\lbrace 0,1,\ldots, \lfloor \log_2 r \rfloor\rbrace$ that corresponds to the scales on which a ``long" path from 1 to $r$ exists.
    
For example, suppose a sequence in (\ref{eq:Lmnseq}) at scales $q_0$ contains a long subsequence. Then it would be accounted by the $\Gamma$ 
 specified by the first $N_{q_0}$ terms of the forward subsequence $0 \le k_1<k_2<\cdots <k_{N_{q_0}}<\cdots$. In (\ref{eq:mainequivclass1}), we take $\sigma(\Gamma)=1$, $\mathcal{L}^\Gamma_m = \mathcal{L}_{(q,k_m)}$, \begin{equation}
\mathcal{L}_{-q} = \mathcal{L} - \sum_k \mathcal{L}_{(q,k)}
\end{equation} and \begin{equation}
\tilde{\mathcal{L}}^\Gamma_m =  \mathcal{L}_{-q}  + \sum_{\AC{k}\le k_{m-1}} \mathcal{L}_{(q,k)}. \label{eq:LGammammain}
\end{equation}
where the ``superfluous" terms $\tilde{\mathcal{L}}^\Gamma_m$ are exactly those terms that does not change the $q$-forward subsequence.  To be precise, the terms allowed between $\mathcal{L}^\Gamma_m=\mathcal{L}_{(q,k_m)}$ and $\mathcal{L}^\Gamma_{m-1}=\mathcal{L}_{(q,k_{m-1})}$ are any $\mathcal{L}_{(q^\prime, k^\prime)}$ with $q\ne q^\prime$, or any $\mathcal{L}_{(q,k^{\prime\prime})}$ with $k^{\prime\prime}\le k_{m-1}$.   The former are allowed because each scale $q$ is treated separately; the latter because they are, by construction, not traversing forward.

In general, it may be the case that a single sequence in (\ref{eq:Lmnseq}) contains multiple long paths on $m$ distinct scales.  The equivalence class $\Gamma$ is then labeled by the $m$ long sequences at $m$ distinct scales.  We generalize the construction of the previous paragraph, and set $\sigma(\Gamma)=(-1)^{1+m}$.   The inclusion-exclusion principle then guarantees that (\ref{eq:mainequivclass1}) does not over count the sequences which have multiple long subsequences, as in Figure~\ref{fig3_longforward}.   

\begin{figure}
\includegraphics[width=0.5\textwidth]{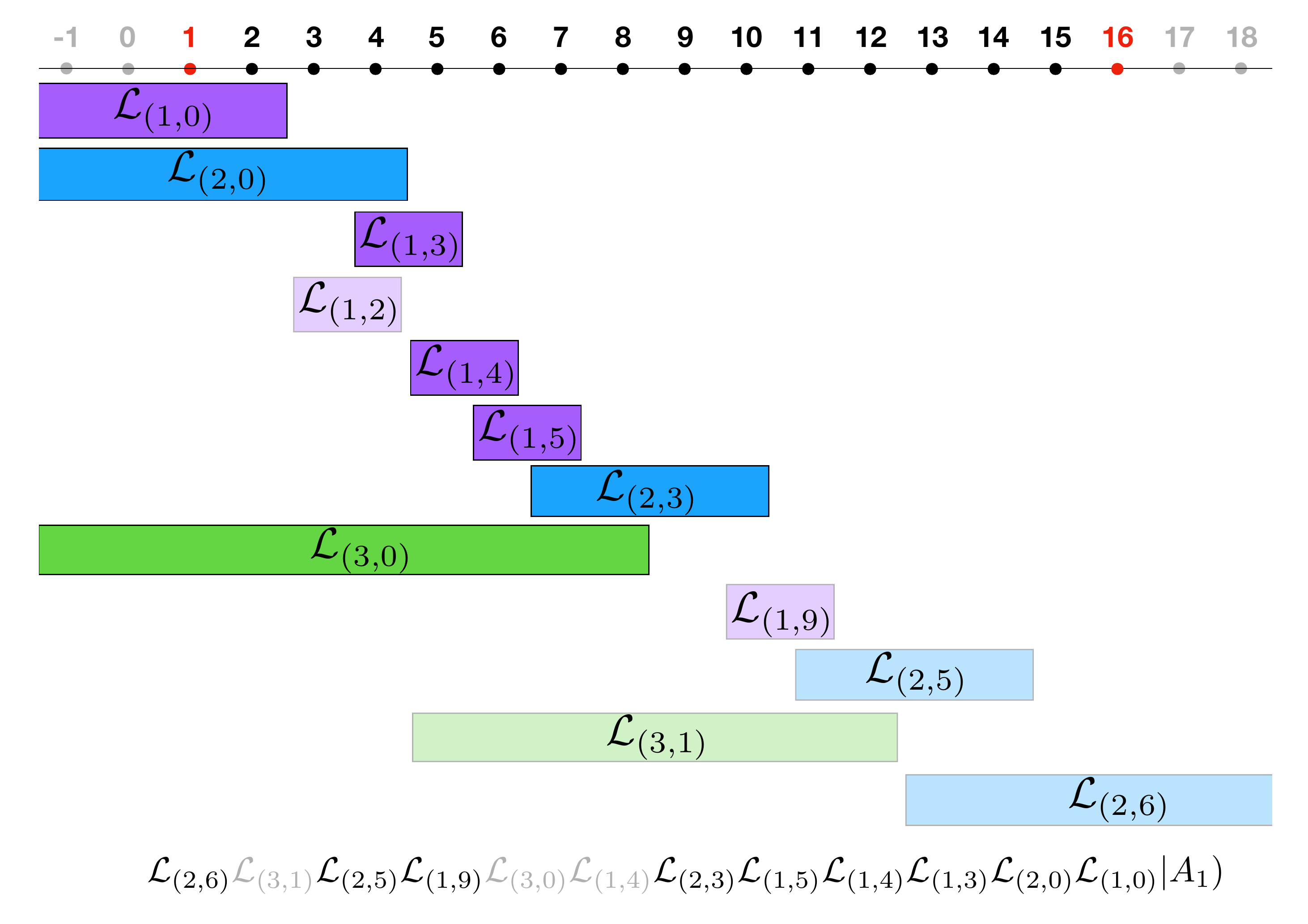}
\caption{Any sequence of $\mathcal{L}$ which grows $A_1(t)$ to the final site 16 must have a long sequence of couplings on at least one scale.  For the particular sequence shown, there are three scales with sufficiently long sequences (no shorter than $16/\log{16})=4$), and we bound the contribution of this sequence to $\lVert[A_1(t),B_{16}]\rVert$ by summing over the weight of all possible paths which contain the solid colored couplings (corresponding to $\mathcal{L}^\Gamma_j$) in a precise order.  The lightly shaded couplings (corresponding to $\tilde{\mathcal{L}}^\Gamma_j$) do not contribute to (\ref{eq:mainequivclass2}).}  
\label{fig3_longforward}
\end{figure}  

 In fact, in order to prove (\ref{eq:main}), we have improved this argument in a few ways.  (\emph{1}) We tune $N_q$ so that the contribution of all scales $q$ to (\ref{eq:mainequivclass2}) is comparable.  (\emph{2}) We demand that all ``long" paths must increase the right-most site on which the operator acts.   We then evaluate (\ref{eq:mainequivclass2}), using that $\lVert \mathcal{L}_{(q,k)}\rVert \lesssim 2^{-q(\alpha-2)}$.  Our results are summarized below.

When $\alpha>3$, the dominant contribution to $\lVert \mathbb{P}\mathrm{e}^{\mathcal{L}t}|A_1)\rVert$ comes from short length scales: a large fraction of the path from 1 to $r$ often occurs in \emph{nearest neighbor} hops.  Scrambling proceeds as if interactions were nearest neighbor alone.  The operator $|A_1(t))$ is largely supported on lattice sites $x<vt$, where $v$ is a finite speed of quantum scrambling.  

When $\alpha<3$, the dominant contribution to $\lVert \mathbb{P}\mathrm{e}^{\mathcal{L}t}|A_1)\rVert$ comes from few long hops across $1$ to $r$. Counting the number of such long hops, we find $t_{\mathrm{s}}(r) =\mathrm{O}(  r^{\alpha-2})$.

If $\alpha=3$, we find that all scales are equally important, which leads to $t_{\mathrm{s}}(r) =\mathrm{O}( r/\log^2 r)$.

 A final comment is that if the Hamiltonian is \emph{frustrated}, we may replace $\alpha \rightarrow \alpha -1$ in Theorem 1:  namely, the linear light cone persists until $\alpha=2$.  A formal definition of frustration is that the maximal eigenvalue of $H_{(q,k)}$ is comparable to its magnitude in a randomly chosen state.   This property is expected to hold for a self-averaging Hamiltonian where each $H_{ij}$ is multiplied by a zero-mean random variable.  Frustration does not hold in a Hamiltonian where all 2-local terms $H_{ij}$ in the Hamiltonian commute (e.g. $H_{ij} = \sigma^z_i \sigma^z_j / |i-j|^\alpha$).  

 \emph{Outlook.---}  We conclude the letter with a discussion of the implications of our theorem.   Recall that our new mathematical methods led to dramatic improvements over existing literature, where the previous optimal bound on scrambling in one dimensional systems was $t_{\mathrm{s}}(r)\gtrsim r^{(\alpha-2)/(\alpha-1)}$ for $\alpha>2$ \cite{maryland18}.  In fact, for any $\alpha>3$, the speed of quantum scrambling is finite:  entanglement \cite{eisert, bravyi} and quantum state transfer \cite{landahl} proceed at a finite rate, and thermalization largely mimics that of a locally interacting system.
 
Our results for frustrated systems are very similar to the numerical simulations of \cite{chen}, where it was argued that a finite speed of scrambling arises for $\alpha \gtrsim 1.5$ in a model with time-dependent random Hamiltonian.   However, in another model with fixed Hamiltonian \cite{luitz}, it was found that $\alpha \gtrsim 1$ marked the onset of the finite scrambling speed.  We conjecture that (\ref{eq:main}) holds with $\alpha \rightarrow \alpha - 1$ (hence, the light cone persists to $\alpha=2$) for all models, including those which are not (by our definition) frustrated, whenever the operators in (\ref{eq:scrambling}) act on a single site.  It would be interesting if this can be proved rigorously.

The techniques developed in this letter may generalize to other important problems in quantum information dynamics, including entanglement growth and quantum scrambling  in finite temperature thermal ensembles, where the speed of quantum scrambling averaged over the thermal ensemble may depend on temperature \cite{swingle}.  We also hope to generalize our main theorem to any spatial dimension $d$. Lastly, we have also used similar techniques to constrain models of holographic quantum gravity \cite{us}.  Given the recent explosion of interest in realizing analogue black holes in quantum simulators \cite{Garttner2017, Li2017}, our methods will constrain which experimental systems have the potential to achieve this ambitious goal.

\emph{Acknowledgements.---}   We thank Alexey Gorshkov, Andrew Guo and Minh Tran for pointing out an error in a previous version of the paper. This work was supported by the Gordon and Betty Moore Foundation's EPiQS Initiative through Grant GBMF4302.

\onecolumngrid
\pagebreak

\setcounter{equation}{0}
\setcounter{figure}{0}
\setcounter{thm}{0}
\renewcommand{\theequation}{S\arabic{equation}}
\renewcommand{\thefigure}{S\arabic{figure}}

\section*{Supplementary Material}

The supplementary material to this letter contains the formal proof of the following theorem:
\begin{thm} Define the parameter $\alpha^\prime$ as follows: \begin{equation}
\alpha^\prime = \left\lbrace \begin{array}{ll} \alpha &\ \text{$H$ is frustrated}  \\ \alpha- 1 &\ \text{otherwise} \end{array}\right..  \label{eq:alphaprime}
\end{equation}  
Let 
\begin{equation}
\mathcal{H} = \bigotimes_{i \in \BZ} \CH_i
\end{equation}
with $\dim(\CH_i)< \infty$.  Then for every $0<\delta<2$, there exists a constant $0<K_{\alpha} <\infty$ for which \begin{equation}
t_{\mathrm{s}}^\delta(r) \ge K_{\alpha^\prime} \times \left\lbrace\begin{array}{ll} r^{\alpha^\prime -1} &\ 1<\alpha^\prime < 2 \\ r(\log r)^{-2} &\ \alpha^\prime = 2 \\ r &\ \alpha^\prime>2 \end{array}\right..
\end{equation}
\end{thm}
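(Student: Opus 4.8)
The plan is to reduce the entire theorem to the operator-growth inequality already established in the excerpt,
\[
\frac{\lVert [A_1(t),B_r]\rVert}{2\lVert A_1\rVert\,\lVert B_r\rVert}\le \frac{\lVert \mathbb{P}_r\mathrm{e}^{\mathcal{L}t}|A_1)\rVert}{\lVert A_1\rVert},
\]
so that it suffices to control how fast $|A_1)$ leaks into the hyperplane $\Sigma_r$. First I would expand $\mathrm{e}^{\mathcal{L}t}|A_1)$ as in \eqref{eq:Lseq} and regroup every elementary coupling $\mathcal{L}_{mn}$ by its dyadic scale $q\approx\lfloor\log_2|m-n|\rfloor$ using the $L$-shaped tiling of Figure~\ref{fig:groupbyscale}, so that at each fixed $q$ the couplings organize into a (next-)nearest-neighbor chain of blocks of $\sim 2^q$ sites. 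The key quantitative input is the block-norm estimate: a scale-$q$ block collects $\sim 2^{2q}$ couplings, each of norm $\lesssim h\,2^{-q\alpha}$, so $\lVert\mathcal{L}_{(q,k)}\rVert\lesssim h\,2^{-q(\alpha-2)}$ in general, whereas frustration (self-averaging or random signs) suppresses the operator norm to the typical magnitude $\sqrt{2^{2q}}\,h\,2^{-q\alpha}$ and gives $\lVert\mathcal{L}_{(q,k)}\rVert\lesssim h\,2^{-q(\alpha-1)}$. Both cases are captured uniformly by $\lVert\mathcal{L}_{(q,k)}\rVert\lesssim h\,2^{-q(\alpha'-1)}$ with $\alpha'$ as in \eqref{eq:alphaprime}, which is what lets me treat the two regimes in one stroke.

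The heart of the argument, and what I expect to be the \emph{main obstacle}, is establishing the exact equivalence-class identity \eqref{eq:mainequivclass1}. The structural observation is that any sequence in \eqref{eq:Lmnseq} reaching $r$ must, by pigeonhole over the $\sim\log_2 r$ scales, advance a distance $\gtrsim r/\log_2 r$ on at least one scale, hence contain a forward subsequence of at least $N_q$ scale-$q$ hops. I would define a class $\Gamma$ by recording, for each scale carrying such a long forward subsequence, its first $N_q$ forward steps $\mathcal{L}^\Gamma_j=\mathcal{L}_{(q,k_j)}$, and collapsing all intervening superfluous couplings — those of a different scale, or those at scale $q$ not advancing the rightmost site — into the norm-preserving propagators $\mathrm{e}^{\tilde{\mathcal{L}}^\Gamma_j(t_j-t_{j-1})}$ generated by the Liouvillians \eqref{eq:LGammammain}. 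Proving that this regrouping reproduces $\mathbb{P}_r\mathrm{e}^{\mathcal{L}t}|A_1)$ exactly requires verifying the time-ordered Dyson identity scale by scale and assigning the inclusion–exclusion signs $\sigma(\Gamma)=(-1)^{1+m}$, with $m$ the number of distinct long scales, so that sequences long on several scales are counted once. Once \eqref{eq:mainequivclass1} holds, the triangle inequality together with $\lVert\mathrm{e}^{\tilde{\mathcal{L}}^\Gamma_j t}\rVert=1$ gives the workhorse bound \eqref{eq:mainequivclass2}, in which only the non-trivial steps survive.

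Finally I would evaluate \eqref{eq:mainequivclass2}. The contribution of a single long scale $q$ is bounded by summing over all increasing choices of $N_q$ forward blocks from the $\sim r/2^q$ available, giving $\binom{r/2^q}{N_q}\,t^{N_q}/N_q!\,(h\,2^{-q(\alpha'-1)})^{N_q}$; using $\binom{M}{N}/N!\le(e^2M/N^2)^N$ this collapses to $\bigl(e^2 r t h\,2^{-q\alpha'}/N_q^2\bigr)^{N_q}$, which is summably small provided $N_q\gtrsim\sqrt{rth}\;2^{-q\alpha'/2}$. The remaining freedom is the tuning of the thresholds $N_q$: they must be large enough for each scale's term to be small, yet the pigeonhole step forces $\sum_q N_q 2^q\lesssim r$ so that every path is genuinely long on some scale. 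Saturating the lower threshold, compatibility reads $\sqrt{rth}\sum_{q=0}^{\log_2 r}2^{q(1-\alpha'/2)}\lesssim r$; the geometric sum is dominated by $q=0$ for $\alpha'>2$, grows like $\log_2 r$ at $\alpha'=2$, and is dominated by $q=\log_2 r$ for $1<\alpha'<2$, yielding $t\lesssim r$, $t\lesssim r/\log^2 r$, and $t\lesssim r^{\alpha'-1}$ respectively. Absorbing the constants into $K_{\alpha'}$ reproduces the claimed bound, and the passage from qubits to general finite-dimensional $\mathcal{H}_i$ only rescales the combinatorial prefactors counting operators per site, which is harmless.
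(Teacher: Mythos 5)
Your proposal is correct and follows essentially the same route as the paper's proof: the reduction of the commutator to a projection norm, the dyadic regrouping of couplings with the frustration-dependent block bound $\lVert\mathcal{L}_{(q,k)}\rVert\lesssim h\,2^{-q(\alpha'-1)}$, the pigeonhole argument forcing a long forward subsequence on some scale, the inclusion--exclusion over scales with the Schwinger--Karplus-type resummation of superfluous couplings, and the binomial counting with thresholds $N_q\propto 2^{-q\alpha'/2}$ tuned against the constraint $\sum_q 2^q N_q\lesssim r$, yielding the same three regimes. Your presentation inverts the order of the final optimization (fixing $N_q$ by requiring each scale's contribution to be small, then imposing the pigeonhole compatibility, rather than defining $N_q$ up front as in the paper), but this is the same calculation.
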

The notation follows that introduced in the main text.

\emph{Proof of Theorem 1:}   
The first step is to massage the commutator norm $\lVert [A_{\le i}(t),B_{\ge j}]\rVert$ to a simpler form, along with simplifying the notation.
Without loss of generality, we may take the starting vertex to be on the left $i<j$.  
Define 
\begin{equation}
n_* = \left\lfloor \log_2 |j-i| \right\rfloor,
\end{equation}
define
\begin{equation}
R = 2^{n_*}
\end{equation}
and \begin{equation}
i^\prime = j - R+1.
\end{equation}
This last equation is used to push the starting vertex $i$ farther forward than it actually is. It suffices to bound the more general case $\lVert [A_{\le i'}(t),B_{\ge j}]\rVert$, since $A_{\le i}$ is an instance of $A_{\le i'}$. 

Let $\mathcal{B}_i$ be the set of all traceless Hermitian operators acting on $\mathcal{H}_i$.  A basis  for $\mathcal{B}$ is spanned by the Hermitian operators contained in the following sets: \begin{equation}
\mathcal{B} = \bigoplus_{S\subseteq \mathbb{Z}} \mathcal{B}_S = \bigoplus_{S\subseteq \mathbb{Z}} \bigotimes_{i\in \mathbb{Z}} \left\lbrace\begin{array}{ll} \mathcal{B}_i &\ i\in S\\ I_i &\ i\notin S \end{array}\right..
\end{equation} 
Here $I_i$ denotes the identity operator acting on $\mathcal{B}_i$.   We define the projection superoperator $\mathbb{P}_S$ (for $S\subseteq \mathbb{Z}$) as follows, by its action on the basis set above:
\begin{equation}
\mathbb{P}_S |\CO_{S^\prime} ) =\left\lbrace
\begin{array}{ll} |\mathcal{O}_{S^\prime}) &\ S\cap S^\prime \ne \emptyset \\ 0 &\ S\cap S^\prime = \emptyset \end{array}\right., \;\;\;\;\; \text{ for } \mathcal{O}_{S^\prime}\in\mathcal{B}_{S^\prime}.
\end{equation}
Let $B_{\ge j} = B_{[j,W]}$ be an operator supported on the set $[j,W]$ for some finite $W \in \mathbb{Z}$.
\begin{prop} If $\mathcal{O}$ is a Hermitian operator, 
\begin{equation}
\sup_{B_{[j,W]}} \frac{\lVert [\mathcal{O},B_{[j,W]}]\rVert}{\lVert \mathcal{O}\rVert \lVert B_{[j,W]}\rVert } \le 2 \frac{\lVert \mathbb{P}_{[j,W]}\mathcal{O}\rVert}{\lVert \mathcal{O}\rVert}
\end{equation}
\label{CommToProj}
\end{prop}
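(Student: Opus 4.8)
The plan is to exploit the defining property of the projection superoperator $\mathbb{P}_{[j,W]}$: it retains exactly those basis operators in $\bigoplus_{S}\mathcal{B}_S$ whose support $S$ overlaps the finite window $[j,W]$, and annihilates those supported entirely in the complement $\mathbb{Z}\setminus[j,W]$. The entire argument then reduces to one elementary fact — operators acting on disjoint tensor factors commute — together with submultiplicativity of the operator norm. First I would split $\mathcal{O}$ along the direct-sum decomposition as $\mathcal{O}=\mathbb{P}_{[j,W]}\mathcal{O}+\bigl(\mathcal{O}-\mathbb{P}_{[j,W]}\mathcal{O}\bigr)$, where the complementary part $\mathcal{O}-\mathbb{P}_{[j,W]}\mathcal{O}=\sum_{S'\cap[j,W]=\emptyset}\mathcal{O}_{S'}$ collects the components $\mathcal{O}_{S'}\in\mathcal{B}_{S'}$ that act as the identity on every site of $[j,W]$. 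Since $B_{[j,W]}$ acts as the identity on every site outside $[j,W]$, each $\mathcal{O}_{S'}$ and $B_{[j,W]}$ act on disjoint tensor factors, so $[\mathcal{O}_{S'},B_{[j,W]}]=0$; by linearity the whole complementary part commutes with $B_{[j,W]}$, giving
\begin{equation}
[\mathcal{O},B_{[j,W]}] = [\mathbb{P}_{[j,W]}\mathcal{O},B_{[j,W]}].
\end{equation}

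Second, I would bound the right-hand side directly by the triangle inequality and submultiplicativity of the operator norm,
\begin{equation}
\lVert[\mathbb{P}_{[j,W]}\mathcal{O},B_{[j,W]}]\rVert \le 2\,\lVert\mathbb{P}_{[j,W]}\mathcal{O}\rVert\,\lVert B_{[j,W]}\rVert.
\end{equation}
Dividing through by $\lVert\mathcal{O}\rVert\,\lVert B_{[j,W]}\rVert$ yields a bound whose right-hand side no longer depends on $B_{[j,W]}$, so the supremum over all $B_{[j,W]}$ is trivial and the claimed inequality follows immediately.

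The one point requiring genuine care — and the main obstacle — is the first step in the infinite-dimensional setting: because $\mathcal{O}$ (playing the role of $A_{\le i'}(t)$) may act nontrivially on infinitely many sites, the complementary sum $\sum_{S'\cap[j,W]=\emptyset}\mathcal{O}_{S'}$ has infinitely many terms and the support decomposition must be argued to be well defined. This is controlled precisely because $[j,W]$ is a fixed finite window and $B_{[j,W]}$ has finite support: the commutation $[\mathcal{O}_{S'},B_{[j,W]}]=0$ holds term by term no matter how far $S'$ extends, so the cancellation survives the limit, which is exactly why localizing attention through $\mathbb{P}_{[j,W]}$ to the finite region touched by $B$ is the correct move. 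I would also emphasize that no monotonicity of the operator norm under $\mathbb{P}_{[j,W]}$ is needed — the estimate tolerates $\lVert\mathbb{P}_{[j,W]}\mathcal{O}\rVert$ possibly exceeding $\lVert\mathcal{O}\rVert$ — which is what keeps the proof clean and makes $\lVert\mathbb{P}_{[j,W]}\mathcal{O}\rVert/\lVert\mathcal{O}\rVert$ the natural quantity controlling scrambling.
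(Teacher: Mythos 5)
Your proof is correct and takes essentially the same route as the paper's: the key identity $[\mathcal{O},B_{[j,W]}]=[\mathbb{P}_{[j,W]}\mathcal{O},B_{[j,W]}]$ followed by the triangle inequality and submultiplicativity, after which the supremum over $B_{[j,W]}$ is trivial. The only difference is that you spell out the disjoint-support justification for that identity (and the harmless infinite-support caveat), which the paper asserts without comment.
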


\begin{proof}
Starting with $[\mathcal{O},B_{[j,W]}] = [\BP_{[j,W]}\mathcal{O},B_{[j,W]}]$, by submultiplicativity and triangle inequality\begin{equation}
\frac{\lVert[\CO,B_{[j,W]}]\rVert}{\lVert B_{[j,W]}\rVert} =\frac{\lVert[\BP_{[j,W]}\CO,B_{[j,W]}]\rVert}{\lVert B_{[j,W]}\rVert} \le 2 \lVert \BP_{[j,W]}\CO\rVert
\end{equation} 
\end{proof}

\begin{prop}
$\mathbb{P}_{[j,W]}$ cannot arbitrarily grow the norm of an operator:
\begin{align}
\lVert \mathbb{P}_{[j,W]}\mathcal{O}\rVert \le 2\lVert\mathcal{O}\rVert 
\end{align}
\label{normdecrease}
\end{prop}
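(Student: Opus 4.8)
The plan is to bound the \emph{complementary} projection and then invoke the triangle inequality. Write $X = [j,W]$, which is a \emph{finite} set of sites since $W$ is finite, and introduce $\mathbb{Q} := \mathbb{I} - \mathbb{P}_{[j,W]}$, the projection onto operators that act trivially on $X$. First I would identify $\mathbb{Q}$ with the normalized partial trace over $X$. Checking on the basis elements $\mathcal{O}_{S^\prime}\in\mathcal{B}_{S^\prime}$ introduced above: when $S^\prime\cap X\neq\emptyset$ some traceless factor sits inside $X$, so $\mathrm{tr}_X(\mathcal{O}_{S^\prime})=0$, whereas when $S^\prime\cap X=\emptyset$ one has $\frac{1}{d_X}\mathrm{tr}_X(\mathcal{O}_{S^\prime})\otimes I_X=\mathcal{O}_{S^\prime}$. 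Comparing with the definition of $\mathbb{P}_{[j,W]}$ yields the operator identity
\begin{equation}
\mathbb{Q}\mathcal{O} = \frac{1}{d_X}\,\mathrm{tr}_X(\mathcal{O})\otimes I_X, \qquad d_X := \dim(\mathcal{H}_X)<\infty.
\end{equation}

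The crucial step is that this conditional expectation is a contraction in operator norm. I would realize the normalized partial trace as a \emph{twirl}, i.e.\ an average of unitary conjugations acting only on $X$:
\begin{equation}
\frac{1}{d_X}\,\mathrm{tr}_X(\mathcal{O})\otimes I_X = \int_{\mathrm{U}(\mathcal{H}_X)} \mathrm{d}U\,(U\otimes I)^\dagger\,\mathcal{O}\,(U\otimes I),
\end{equation}
the Haar average over unitaries on the finite-dimensional factor $\mathcal{H}_X$ (equivalently, for qubits, the uniform average over the $4^{|X|}$ Pauli strings supported on $X$, which form a unitary $1$-design and produce the same right-hand side). Each integrand is a unitary conjugate of $\mathcal{O}$ and so has operator norm exactly $\lVert\mathcal{O}\rVert$; convexity of the operator norm (Jensen's inequality for the average) then gives $\lVert\mathbb{Q}\mathcal{O}\rVert\le\lVert\mathcal{O}\rVert$.

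Finally, since $\mathbb{P}_{[j,W]}\mathcal{O} = \mathcal{O}-\mathbb{Q}\mathcal{O}$, the triangle inequality delivers
\begin{equation}
\lVert \mathbb{P}_{[j,W]}\mathcal{O}\rVert \le \lVert \mathcal{O}\rVert + \lVert \mathbb{Q}\mathcal{O}\rVert \le 2\lVert\mathcal{O}\rVert,
\end{equation}
which is the claim.

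I expect the only genuine subtlety — the main obstacle — to be the infinite tensor-product setting: $\mathcal{H}$ is infinite dimensional, so one must justify that tracing out $X$ is legitimate and that the twirl is an average against an honest probability measure. This is precisely where finiteness of $W$ enters: $X$ is a finite region, $\mathcal{H}_X$ is finite dimensional, and $\mathcal{O}$ is bounded, so $\mathrm{tr}_X(\mathcal{O})$ is a well-defined bounded operator on the complement and the twirl is an average over a compact group. The remaining ingredients — the basis computation identifying $\mathbb{Q}$ with the partial trace, and the final triangle inequality — are routine.
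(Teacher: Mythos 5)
Your proof is correct, and it reaches the bound by a genuinely different route from the paper's. The paper works with the projector itself: it writes $\mathbb{P}_{[j,W]}\mathcal{O} = \frac{1}{2\cdot 4^m}\sum_\omega [T^\omega,[T^\omega,\mathcal{O}]]$ (a Casimir-element identity over the $4^m$ Pauli strings $T^\omega$ supported on $[j,W]$, after padding the local dimension to $2^m$), and then applies the triangle inequality and submultiplicativity to each of the $4^m-1$ nonvanishing double commutators, each bounded by $4\lVert\mathcal{O}\rVert$, yielding $\frac{2(4^m-1)}{4^m}\lVert\mathcal{O}\rVert < 2\lVert\mathcal{O}\rVert$. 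You instead bound the complementary projector $\mathbb{Q}=\mathbb{I}-\mathbb{P}_{[j,W]}$, identifying it with the normalized partial trace and realizing that conditional expectation as a twirl, so that $\lVert\mathbb{Q}\mathcal{O}\rVert\le\lVert\mathcal{O}\rVert$ follows from unitary invariance of the operator norm plus convexity of the average; the factor of $2$ then comes from a two-term triangle inequality. The two arguments are algebraically two faces of the same identity: since $(T^\omega)^2 = I$, one has $[T^\omega,[T^\omega,\mathcal{O}]] = 2\mathcal{O} - 2\,T^\omega\mathcal{O}T^\omega$, so the paper's Casimir identity is exactly your statement that $\mathbb{Q}$ is the Pauli twirl $\frac{1}{4^m}\sum_\omega T^\omega\mathcal{O}T^\omega$. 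What your route buys: it exploits unitarity of the $T^\omega$ (not merely $\lVert T^\omega\rVert = 1$), which upgrades the term-by-term bound to the exact contraction statement $\lVert\mathbb{Q}\rVert\le 1$, and the Haar-twirl version applies verbatim to any finite local dimension, with no need to pad $\dim\mathcal{H}_{[j,W]}$ to a power of two. What the paper's route buys: it is purely finite and algebraic (no Haar integration, though as you note the finite Pauli average suffices for the same conclusion), and it yields the marginally sharper constant $2(1-4^{-m})$. Finally, your handling of the infinite chain is at the same level of rigor as the paper's: both proofs verify the key identity on product basis elements and extend by linearity, so there is no gap on that front.
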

\begin{proof}
Without loss of generality, let $\dim(\mathcal{H}_{[j,W]}) = d$, and let $d=2^m$.  If $d\ne 2^m$, we may replace $d$ with $2^{\lceil \log_2 d\rceil}$ and simply treat $H_{[j,W]}$ as vanishing on any of the added states.     Once $d=2^m$, we consider \begin{equation}
\mathcal{H}_{[j,W]} \simeq \bigotimes_{j=1}^m \mathbb{C}^2,
\end{equation}
and let $\sigma^\omega_j$ denote the Pauli matrices acting on each of the $m$ 2-dimensional Hilbert spaces.   A complete basis for all operators acting on $\mathcal{H}_R$ is spanned by\begin{equation}
|T^{\omega}) :=  \bigotimes_{j=1}^m \sigma^{\omega_j}_j, \;\;\; (\alpha_j=0,1,2,3),
\end{equation}
where $\sigma^0_j$ represents the identity matrix acting on block $j$.   Note that the operator norms $\lVert T^{\omega}\rVert = 1$ for all $4^m$ basis vectors.

Next, notice that the projector is proportional to the Casimir element of $\mathrm{SU}(d)$: 
\begin{equation}
\BP_{[j,W]} \CO = \frac{1}{2 \cdot 4^{m}} \sum_{\lbrace \omega \rbrace } [T^{\omega},[T^{\omega},\CO]].  \label{eq:improvedprop3}
\end{equation}
Then by (\ref{eq:improvedprop3}), $\lVert T^{\omega}\rVert = 1$, submultiplicativity and the triangle inequality, we obtain
\begin{equation}
\lVert \BP_{[j,W]} \CO \rVert \le \frac{4\times (4^m-1)}{2 \cdot 4^{m}} \lVert \CO \rVert < 2 \lVert \CO \rVert.
\end{equation}

For readers unfamiliar with (\ref{eq:improvedprop3}), we provide an explicit proof as follows. As this identity is linear, we only need to show for operators of the form $\mathcal{O}_{[j,W]} \otimes \mathcal{O}_{[j,W]^c}$, where $\mathcal{O}_{[j,W]^c}$ denotes an arbitrary operator acting on $\CH_{[j,W]^c}$.   Without loss of generality, we take $\mathcal{O}_{[j,W]}$ to be one of the basis vectors $T^{\omega}$.   

A simple calculation shows that two Pauli string $\omega, u$ either commute or anti-commute \begin{equation}
T^{\omega}T^{u} = S(\omega,u) T^{u}T^{\omega},
\end{equation}
where \begin{equation}
S(\omega, u) = \prod_{j=1}^m \left\lbrace \begin{array}{ll} 1 &\ \omega_j=0, \; u_j = 0 \text{, or }\omega_j=u_j \\ -1 &\ \text{otherwise} \end{array}\right..
\end{equation}
If $T^{\omega_j}$ is the identity (i.e. all $\omega_j=0$), every commutator in (\ref{eq:improvedprop3}) vanishes:  indeed such terms must vanish from the projector by definition; if there is any $\omega_j$ that does not equal zero, then exactly half ($\frac{4^m}{2}$) of the commutators in (\ref{eq:improvedprop3}) do not vanish. Observe that for any choice of $(u_2, u_3,\cdots, u_m),$ it produces two $ S = +1 $ and two $S=-1$.  \begin{equation}
S(\omega, u) = \left\lbrace \begin{array}{ll} -S(\omega_2,\ldots ,\omega_m, u_2,\ldots, u_m) &\ \omega_1\ne 0\text{ and } \omega_1\ne u_1 \\ S(\omega_2,\ldots ,\omega_m, u_2,\ldots, u_m) &\ \omega_1= 0\text{ or } \omega_1= u_1 \end{array}\right.
\end{equation} 
  Then (\ref{eq:improvedprop3}) follows from \begin{equation}
\sum_{\lbrace \alpha_j \rbrace } [T^{\alpha_j},[T^{\alpha_j},T^{\beta_j}]] = \sum_{\lbrace \alpha_j\rbrace} (1-S(\alpha_j,\beta_j))^2T^{\beta_j} = (4\times \frac{1}{2}\times 4^{m}) T^{\beta_j}.
\end{equation}
\end{proof}

By Proposition~\ref{CommToProj}, we can replace the commutator norm $\lVert [A_{\le i^\prime}(t),B_{[j,W]}]\rVert$ of Lieb and Robinson by the norm of a projection.   For simplicity, we now simplify the notation to mimic that of the main text:
 \begin{equation}
\lVert \mathbb{P}_{[j,W]}|A_{\le i'}(t)) \rVert= \lVert \mathbb{P}_{R}|A_{1}(t))\rVert
\end{equation}
where we have shortened the index $[j,W]$ by $R$ and the index $\le i^\prime$ by 1.   We expect that our results also hold when the operator $B_{[j,W]}$ acts on an infinitely large subspace, but this introduces additional mathematical complications which are otherwise unnecessary.  Indeed, $W$ will never show up again in our proof.  Note that $2R\ge r= |i-j|\ge R =  |i^\prime-j|$; our shortening of the domain of interest will not qualitatively modify our results.

The next step of the proof, as sketched in the main text, is to organize the sequences of Liouvillians $\mathcal{L}_{X_n}\cdots \mathcal{L}_{X_1}$ in (\ref{eq:Lseq}) by paths from $i^\prime$ to $j$ on multiple different scales.  Given two non-negative integers $q\ge 1$ and $k\ge 0$, we define the sets 
\begin{subequations}\begin{align}
\mathcal{Q}(1,k) &:= \lbrace \lbrace k+1,k+2 \rbrace \rbrace , \\
\mathcal{Q}(q,k) &:= \lbrace \lbrace m,n\rbrace  : 1\le m<n\le R , \;\;\; 2^{q-1} k < m < n \le 2^{q-1} (k+2)\rbrace  - \bigcup_{k^\prime \ge 0, q^\prime < q} \mathcal{Q}(q^\prime,k^\prime), \;\;\; (q>1). 
\end{align}\end{subequations}
These sets contain all the couplings (at each scale) which can propagate information forward, and will be used to reduce the problem to a simpler calculation on a one dimensional line with nearest neighbor interactions.  We reorganize the 2-local Liouvillians $\mathcal{L}_{ij}$ according to $\mathcal{Q}(q,k)$: \begin{equation}
\mathcal{L} = \sum_{(q,k)} \mathcal{L}_{(q,k)} \label{eq:LLqk}
\end{equation}
where \begin{equation}
 \mathcal{L}_{(q,k)}:= \sum_{\lbrace m,n\rbrace \in \mathcal{Q}(q,k)} \widetilde{\mathcal{L}}_{mn}.
\end{equation}
and we define the shifted 2-local Liouvillians to take care for interaction with longer that $|i^\prime-j|$ (e.g. $\mathcal{L}_{i^\prime-10, j+4}$):
 \begin{equation}
\widetilde{\mathcal{L}}_{mn} := \left\lbrace \begin{array}{ll} \mathcal{L}_{i^\prime+m-1,i^\prime+n-1} &\ 1<m<n<R \\ \displaystyle \sum_{k\le i^\prime} \mathcal{L}_{k,i^\prime+n-1} &\ 1=m<n<R \\ \displaystyle \sum_{k\ge j} \mathcal{L}_{i^\prime+m-1, k} &\ 1<m<n=R \\ \displaystyle \sum_{k\le i^\prime, k^\prime \ge j} \mathcal{L}_{k,k^\prime} &\ 1=m<n=R  \end{array}\right..
\label{eq:edgecase}
\end{equation}

One definition of a frustrated Hamiltonian is that there exists a constant $K$ such that for all $(q,k)$:
\begin{equation}
K \lVert H_{(q,k)}\rVert\le \lVert H_{(q,k)}\rVert_2,
\label{frustrated}
\end{equation}
with constant $0<K<\infty$ independent of $q$, and \begin{equation}
\lVert H_{(q,k)}\rVert_2^2 = \frac{\mathrm{tr}(H_{(q,k)}^2)}{\dim(\mathcal{H}_{(q,k)})}.
\end{equation}

\begin{lem} \label{renormalize}
The super-operator norm is bounded by 
 \begin{equation}
\lVert \mathcal{L}_{(q,k)} \rVert_{} \le  \frac{b}{2^{q(\alpha^\prime-1)}}  
\label{Leff}
\end{equation}
where
\begin{equation}
b :=  h\times\left\lbrace \begin{array}{ll} \displaystyle \dfrac{2^{2\alpha-\frac{1}{2}}}{(\alpha-1)K} &\ \text{frustrated model} \\\displaystyle \dfrac{2^{\alpha+2}}{(\alpha-1)(\alpha-2)} &\ \text{any $H$}  \end{array}\right..
\end{equation}

\end{lem}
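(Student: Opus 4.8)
The plan is to pass from the super-operator norm to an ordinary operator norm, and then to estimate $\lVert H_{(q,k)}\rVert$ in two different ways according to whether $H$ is frustrated. Since $\mathcal{L}_{(q,k)} = \mathrm{i}[H_{(q,k)},\cdot]$, the elementary commutator bound $\lVert[X,\mathcal{O}]\rVert \le 2\lVert X\rVert\,\lVert\mathcal{O}\rVert$ immediately yields $\lVert \mathcal{L}_{(q,k)}\rVert \le 2\lVert H_{(q,k)}\rVert$ (the tail sums in \eqref{eq:edgecase} only matter insofar as they keep $H_{(q,k)}$ a bounded operator). Everything then reduces to bounding the operator norm of the block Hamiltonian $H_{(q,k)} = \sum_{\{m,n\}\in\mathcal{Q}(q,k)} \widetilde{H}_{mn}$, and the two values of $\alpha^\prime$ in \eqref{eq:alphaprime} will emerge from an $\ell^1$ estimate in the generic case and an $\ell^2$ (Hilbert--Schmidt) estimate in the frustrated case.

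For a generic $H$, I would apply the triangle inequality, $\lVert H_{(q,k)}\rVert \le \sum_{\{m,n\}\in\mathcal{Q}(q,k)}\lVert\widetilde{H}_{mn}\rVert$, and insert \eqref{eq:hijbound}. The key input is the geometry of $\mathcal{Q}(q,k)$: by construction a pair first becomes admissible at scale $q$ only once its endpoints are separated by a distance $d$ in the range $[2^{q-1},2^q)$, and for each such $d$ the window of width $2^q$ contains at most $2^q-d$ admissible pairs. Summing $h\,(2^q-d)\,d^{-\alpha}$ over $d$ is a finite geometric-type sum that scales as $2^{-q(\alpha-2)}$, reproducing $\alpha^\prime=\alpha-1$, so the bulk terms already obey \eqref{Leff}. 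The delicate contributions are the edge-bundled Liouvillians of \eqref{eq:edgecase}, above all the corner term $\widetilde{\mathcal{L}}_{1,R}$, which packages the double tail sum $\sum_{k\le i^\prime}\sum_{k^\prime\ge j} h\,(k^\prime-k)^{-\alpha}$; this is where the two convergence factors $(\alpha-1)(\alpha-2)$, and the requirement $\alpha>2$, originate. Bounding every edge family by its own tail sum and choosing the prefactor generously produces the stated constant $b$ uniformly in $q$.

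For a frustrated $H$, I would instead invoke the defining inequality \eqref{frustrated}, $\lVert H_{(q,k)}\rVert \le K^{-1}\lVert H_{(q,k)}\rVert_2$, and bound the Hilbert--Schmidt norm. Expanding $\lVert H_{(q,k)}\rVert_2^2 = \dim^{-1}\,\mathrm{tr}\big(H_{(q,k)}^2\big)$ as a double sum over pairs, I would argue that the cross terms $\mathrm{tr}(\widetilde{H}_{mn}\widetilde{H}_{m^\prime n^\prime})$ with $\{m,n\}\ne\{m^\prime,n^\prime\}$ vanish, since distinct two-body couplings are orthogonal in the Hilbert--Schmidt inner product once expanded in Pauli strings. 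This collapses the estimate to $\lVert H_{(q,k)}\rVert_2^2 \le \sum_{\{m,n\}}\lVert\widetilde{H}_{mn}\rVert^2 \le \sum h^2 d^{-2\alpha}$. Because the sum now adds in quadrature, the same $\sim 2^q$ terms of size $\sim 2^{-q\alpha}$ give a bound scaling as $2^{-q(\alpha-1)}$ rather than $2^{-q(\alpha-2)}$; after the square root this is precisely the extra power of $2^{-q}$ that upgrades $\alpha^\prime$ from $\alpha-1$ to $\alpha$, and tracking the single surviving geometric factor together with the $\tfrac12$-power from the square root delivers the frustrated constant.

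I expect the main obstacle to be the uniform bookkeeping rather than any single inequality: one must verify that a single constant $b$ (and exponent $\alpha^\prime$) controls every block $(q,k)$ simultaneously --- bulk pairs, the four edge-bundled families of \eqref{eq:edgecase}, and the small-$q$ blocks --- with every infinite tail sum converging for $\alpha>2$. In the frustrated case the additional subtle point is justifying the vanishing (or sign) of the Hilbert--Schmidt cross terms and confirming that the frustration hypothesis \eqref{frustrated} holds with a $q$-independent constant $K$ for each renormalized block $H_{(q,k)}$; this is the step that genuinely separates frustrated from generic Hamiltonians and is responsible for the improvement $\alpha^\prime=\alpha$.
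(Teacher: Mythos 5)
Your proposal follows the paper's own proof essentially step for step: first reduce $\lVert\mathcal{L}_{(q,k)}\rVert \le 2\lVert H_{(q,k)}\rVert$, then for generic $H$ use the triangle ($\ell^1$) bound over pairs with separation $\gtrsim 2^{q-1}$ including the semi-infinite tail sums from \eqref{eq:edgecase} (which is exactly where the paper's factors $(\alpha-1)(\alpha-2)$ and the requirement $\alpha>2$ arise), and for frustrated $H$ use \eqref{frustrated} together with the vanishing of Hilbert--Schmidt cross terms between distinct two-body couplings to get the $\ell^2$ (in-quadrature) bound and the improved power $2^{-q(\alpha-1)}$. The differences are cosmetic --- you count bulk pairs by distance $d$ with multiplicity $\le 2^q-d$ where the paper over-bounds everything at once by semi-infinite double sums, and you leave the explicit constant $b$ untracked --- so this is the same argument as the paper's.
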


\begin{proof}\emph{Case 1: Frustrated models}.  Observe that \begin{align}
\lVert \mathcal{L}_{(q,k)} \rVert &
 = \sup_{\mathcal{O}} \frac{\lVert \mathcal{L}_{(q,k)}  \mathcal{O}\rVert}{\lVert  \mathcal{O}\rVert} \le 2 \lVert H_{(q,k)} \rVert = \frac{2}{K}\sqrt{\frac{\mathrm{tr}\left(H_{(q,k)}^2 \right)}{\dim(\mathcal{H}_{(q,k)})}}\\
 &
 \le\frac{2}{K}\sqrt{\sum_{\lbrace m,n \rbrace \in \mathcal{Q}(q,k)} \lVert \widetilde{H}_{mn}\rVert^2 } \notag \\ 
&\le \frac{2}{K} \sqrt{\sum_{m=2^q(k+\frac{1}{2})-1}^{-\infty} \sum_{n=2^q(k+1) }^\infty\frac{h^2}{|m-n|^{2\alpha}} + \sum_{m=2^q(k+1)-1}^{-\infty} \sum_{n=2^q(k+\frac{3}{2}) }^\infty \frac{h^2}{|m-n|^{2\alpha}} } \notag \\
&< \frac{2}{K}\sqrt{2}\sqrt{\sum_{m=2^q(k+\frac{1}{2})-1}^{-\infty} \sum_{n=2^q(k+1) }^\infty \frac{h^2}{|m-n|^{2\alpha}}} < \frac{2}{K}h\sqrt{2}\sqrt{\sum_{m,n=0}^\infty \frac{1}{|2^{q-1} + m+n|^{2\alpha}}} \notag \\
&< \frac{2}{K}h\sqrt{2}\sqrt{\sum_{m,n=1}^\infty \frac{3^{2\alpha}}{|2^{q-1} + m+n|^{2\alpha}}} < \frac{3^\alpha 2^{\frac{3}{2}}h}{K} \sqrt{\int\limits_0^\infty \mathrm{d}m\int\limits_0^\infty \mathrm{d}n \frac{1}{(2^{q-1}+m+n)^{2\alpha}}} \notag \\
&< \frac{3^\alpha 2^{\frac{3}{2}}h}{K}\sqrt{\int\limits_0^\infty \mathrm{d}m \frac{1}{(2\alpha-1)(2^{q-1}+m)^{2\alpha-1}}}  < \frac{3^\alpha 2^{\frac{3}{2}}h}{K\sqrt{(2\alpha-1)(2\alpha-2)} 2^{(q-1)(\alpha-1)}}
\end{align}
where in the first line, we used the triangle inequality and submultiplicativity ($\lVert A B\rVert \le \lVert A \rVert \lVert B\rVert$); in the second we used the fact that the product of two non-trivial two-body operators acting on non-identical degrees of freedom must be traceless; in the third line we constrained all possible pairs $\lbrace m,n \rbrace$ in $\mathcal{Q}(q,k)$ ; in the fourth line we employed (\ref{eq:hijbound}), and the remainder of inequalities are elementary manipulations.

\emph{Case 2: Any $H$}.  We simply use the triangle inequality on $\lVert H_{(q,k)}\rVert$:
\begin{align}
\lVert \mathcal{L}_{(q,k)} \rVert &\le 2 \lVert H_{(q,k)} \rVert  \le  2 \sum_{m=2^q(k+\frac{1}{2})-1}^{-\infty} \sum_{n=2^q(k+1) }^\infty\frac{h}{|m-n|^{\alpha}} + \sum_{m=2^q(k+1)-1}^{-\infty} \sum_{n=2^q(k+\frac{3}{2}) }^\infty \frac{h}{|m-n|^{\alpha}} \notag \\
&<2h \sum_{m=2^q(k+\frac{1}{2})-1}^{-\infty} \sum_{n=2^q(k+1) }^\infty \frac{1}{|m-n|^\alpha} < 4h \sum_{m,n=1}^\infty \frac{2^\alpha}{|2^{q-1}+m+n|^\alpha}  < \frac{2^{\alpha+2}h}{(\alpha-1)(\alpha-2)2^{(q-1)(\alpha-2)}}
\end{align}

\end{proof}

Let $\beta_i=(q,k)$ denote one of the sets of couplings at scale $q$ described  above.  For convenience, when $\beta=(q,k)$, we will write $q(\beta) = q$ and $k(\beta) = k$.   Let $(\beta_1,\ldots, \beta_n)$ denote an ordered sequence of Liouvillians $\mathcal{L}_{\beta_n}\cdots \mathcal{L}_{\beta_1}$.    

\begin{lem}
 Every non-vanishing sequence must satisfy 
 \begin{subequations}\begin{align}
&k(\beta_1) = 0\\
&2^{q(\beta_m)-1} k(\beta_m) + 1 \le \max_{1 \le m^\prime <m} \left( 2^{q(\beta_{m^\prime})-1}(k(\beta_{m^\prime})+2) \right).  \label{eq:1dcreeping}
\end{align}\end{subequations}
This kind of sequence $\beta = (\beta_1,\ldots, \beta_n)$ is an instance of a broader notion called creeping \cite{us} applied to this system. \label{lemmacreeping}
\end{lem}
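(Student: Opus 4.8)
\emph{Plan of proof.---} The strategy is to track how the spatial support of the operator $\mathcal{L}_{\beta_m}\cdots\mathcal{L}_{\beta_1}|A_1)$ grows as successive Liouvillians are applied, using only the elementary fact that $[\widetilde{\mathcal{L}}_{mn},\mathcal{O}]=0$ whenever the pair $\{m,n\}$ is disjoint from the support of $\mathcal{O}$. Since each $\mathcal{L}_{(q,k)}$ is a sum of such commutators over the pairs in $\mathcal{Q}(q,k)$, a single factor $\mathcal{L}_{\beta_m}$ can act non-trivially only if at least one site touched by the block $\beta_m$ already lies in the current support. From the definitions of $\mathcal{Q}(q,k)$ (together with the edge conventions (\ref{eq:edgecase})), every pair in block $(q,k)$ involves only sites in $[\,2^{q-1}k+1,\ 2^{q-1}(k+2)\,]$; I will call $2^{q-1}k+1$ the left end and $2^{q-1}(k+2)$ the right end of the block, a labeling that covers both $q=1$ and $q>1$ uniformly. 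Because the chain has been relabeled so that $A_1$ sits at the left edge (site $1$), the support always lies in $[1,\infty)$, and the whole argument reduces to controlling its rightmost site.

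First I would dispose of the base case $k(\beta_1)=0$. The initial operator $|A_1)$ is supported only on site $1$, so $\mathcal{L}_{\beta_1}|A_1)\neq 0$ forces $\beta_1$ to touch site $1$; this requires its left end to satisfy $2^{q(\beta_1)-1}k(\beta_1)+1\le 1$, whose unique solution is $k(\beta_1)=0$. Next I would prove, by induction on $m$, that $\mathcal{L}_{\beta_m}\cdots\mathcal{L}_{\beta_1}|A_1)$ is supported inside $[1,M_m]$, where $M_m:=\max_{1\le m'\le m}2^{q(\beta_{m'})-1}(k(\beta_{m'})+2)$ is the largest right end seen so far. The base case is $M_0=1$ with support $\{1\}$; for the inductive step, applying $\mathcal{L}_{\beta_m}$ can only enlarge the support by the sites of $\beta_m$, all of which lie at or below its right end $2^{q(\beta_m)-1}(k(\beta_m)+2)\le M_m$, so the support stays inside $[1,M_m]$.

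The creeping inequality (\ref{eq:1dcreeping}) then follows by combining these two observations. Let $R_{m-1}$ denote the actual rightmost site of the support after $m-1$ steps, so that the induction gives $R_{m-1}\le M_{m-1}$. For $\mathcal{L}_{\beta_m}$ to act non-trivially, the left end of $\beta_m$ cannot exceed $R_{m-1}$: otherwise every site of $\beta_m$ lies strictly to the right of the entire support and the commutator vanishes. Hence $2^{q(\beta_m)-1}k(\beta_m)+1\le R_{m-1}\le M_{m-1}=\max_{1\le m'<m}2^{q(\beta_{m'})-1}(k(\beta_{m'})+2)$, which is exactly (\ref{eq:1dcreeping}). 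Since non-vanishing of the full product $\mathcal{L}_{\beta_n}\cdots\mathcal{L}_{\beta_1}|A_1)$ guarantees non-vanishing of every prefix, the inequality holds at each step $m$.

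The main obstacle is not the combinatorial skeleton above but verifying that the support bookkeeping is genuinely airtight in the presence of the edge conventions (\ref{eq:edgecase}): the shifted Liouvillians with $m=1$ or $n=R$ collapse a semi-infinite tail of physical sites onto the boundary sites $1$ and $R$, and I must confirm that these absorb support only at the boundaries and never push it to the left of site $1$, so that ``rightmost site'' stays a well-defined monotone quantity and the left/right-end formulas apply uniformly across all four cases. Once this is checked, the reading of (\ref{eq:1dcreeping}) as the statement that \emph{each block must reach back into the territory already explored} is immediate, and the lemma is recognized as the instance of the general creeping bound of \cite{us} specialized to this scale decomposition.
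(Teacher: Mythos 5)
Your proof is correct and takes essentially the same route as the paper's: the paper's (two-sentence) argument is exactly that non-vanishing forces $\mathcal{L}_{\beta_1}$ to overlap site $1$ (giving $k(\beta_1)=0$) and each $\mathcal{L}_{\beta_m}$ to overlap the support built so far, which is the content of (\ref{eq:1dcreeping}). Your induction on the rightmost supported site, together with the explicit left/right-end bookkeeping for the blocks $\mathcal{Q}(q,k)$ and the edge conventions (\ref{eq:edgecase}), is simply the careful spelling-out of that overlap argument, and is sound.
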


\begin{proof} This proof also follows \cite{us} and is straightforward. $\mathcal{L}_{\beta_n}\cdots \mathcal{L}_{\beta_1}|A_1) \ne 0$ implies $\CL_{\beta_1}$ overlaps with site $A_1$, which implies $k(\beta_1)=0$. It also implies $\mathcal{L}_{\beta_m}$ overlap with $\mathcal{L}_{\beta_{m-1}}\cdots \mathcal{L}_{\beta_1}|A_1) $, which is the condition (\ref{eq:1dcreeping}).

%
\end{proof}

We say that a sequence $\beta = (\beta_1,\ldots,\beta_n)$ is a \emph{forward} sequence from $j_1$ to $j_2$ if for all $1\le m<n$, $2^{q(\beta_m)-1}(k(\beta_m)+2) < 2^{q(\beta_{m+1})-1}(k(\beta_{m+1})+2)$, and if $2^{q(\beta_1)-1}k(\beta_1) = j_1$ and $2^{q(\beta_{n})-1}(k(\beta_{n})+2)  = j_2$. As we will see in Lemma \ref{ExistForward}, every creeping sequence from 1 to $R$ must have a sufficiently ``long" forward subsequence, and these forward sequences will then play a crucial role in our proof.  We define \begin{equation}
N_q = \left\lceil \frac{1}{2} \dfrac{2^{-q(\alpha^\prime-2)/2}}{\displaystyle \sum_{q^\prime=1}^{n_*} 2^{-q^\prime(\alpha^\prime-2)/2}}  \frac{R}{2^{q}} \right\rceil \label{eq:Nq}
\end{equation}
to be the number of couplings at scale $q$ which makes a sequence ``long" -- in our context, we chose $N_q$ such that the long paths at each scale contributes to the commutator norm slowly and somewhat ``equally" between all scales.  (This will be proven towards the end of our proof of the theorem.) We say that a forward sequence from $i^\prime$ to $j$ is a \emph{long $q$-forward sequence}  from $1$ to $R$ if (\emph{1}) it contains a forward subsequence of length $N_q$, $\beta_q = (\beta_{i_1},\ldots,\beta_{i_{N_q}})$ with the same scale $q = q(\beta_{i_m})$ for all $1\le m\le N_q$, and (\emph{2}) any forward subsequence $\beta^\prime$ remains forward if any element of $\beta_q$ is added to the sequence $\beta^\prime$.  In simpler terms, this forward subsequence must correspond to a sequentially increasing sequence of couplings at scale $q$, each of which also can grow the operator to the right.
As a matter of bookkeeping, we denote subsequence $\beta'$ of $\beta$ as $\beta^\prime\subseteq \beta$  and define characteristic functions $\chi_q$ to indicate sequences with long $q$-subsequences:
\begin{equation}
\chi_q \mathcal{L}_{\beta_p}\cdots \mathcal{L}_{\beta_1}
:= \left\lbrace \begin{array}{ll}
\mathcal{L}_{\beta_p}\cdots \mathcal{L}_{\beta_1}  &\  \text{if there exists long $q$-forward subsequence } \beta^\prime\subseteq \beta \\
0 &\ \text{else} \end{array}\right..
\label{eq:GZdef}
\end{equation}  

Having proven the lemmas above, we now set the stage for the remainder of the proof.   Let $\mathcal{S}$ denote the set of all creeping sequences which contain a forward subsequence from 1 to $R$, \begin{equation}
\mathbb{P}_R \mathrm{e}^{\mathcal{L}t} |A_1) = \mathbb{P}_R \sum_{p=0}^\infty \frac{t^p}{p!}\mathcal{L}^p |A_1) = \mathbb{P}_R \sum_{p=0}^\infty \frac{t^p}{p!} \sum_{\beta \in \mathcal{S}:|\beta|=p} \mathcal{L}_{\beta_p}\cdots \mathcal{L}_{\beta_1}|A_1). 
 \label{eq:PReLt}
\end{equation}
Naively bounding (\ref{eq:PReLt}) would lead to a lousy bound. The main idea is that we can repackage these terms using the inclusion-exclusion principle, where each group of term resums nicely. We exclude the paths without long $q$-forward sequences for any $q$: such paths vanish, as they cannot creep far enough to reach $R$, as shown by the following lemma:
\begin{lem} If $\beta = (\beta_1,\ldots, \beta_n)$ is creeping and $\mathcal{L}_{\beta_n}\cdots \mathcal{L}_{\beta_1}|A_0) \ne 0$, then it has a long $q$-forward subsequence for at least one integer $0\le q \le n_*$.  
\label{ExistForward}
\end{lem}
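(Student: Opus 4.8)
The plan is to argue by contradiction through a pigeonhole estimate on the total distance the operator must travel. For a creeping sequence $\beta=(\beta_1,\ldots,\beta_n)$ with $\mathcal{L}_{\beta_n}\cdots\mathcal{L}_{\beta_1}|A_1)\ne 0$, I would track the running rightmost site
\begin{equation}
r_m := \max_{1\le m^\prime \le m} 2^{q(\beta_{m^\prime})-1}\big(k(\beta_{m^\prime})+2\big),
\end{equation}
with $r_0=1$ (the left-end operator is supported on sites $\le 1$). By Lemma~\ref{lemmacreeping} the creeping property makes $r_m$ non-decreasing, and the hypothesis that the sequence reaches the support of $\BP_R$ forces $r_n\ge R$, so the total forward progress is $r_n-r_0\ge R-1$. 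I call step $m$ \emph{advancing} if $r_m>r_{m-1}$, assign it to its scale $q(\beta_m)$, and note that only advancing steps contribute to $r_n-r_0$; hence the total progress splits as a sum over scales of the progress accumulated at each scale.

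The first quantitative step bounds the progress of a single advancing step at scale $q$. The block $(q,k)$ is supported on $(2^{q-1}k,\,2^{q-1}(k+2)]$, a window of width $2^q$, and the creeping condition (\ref{eq:1dcreeping}) forces $2^{q-1}k+1\le r_{m-1}$, so the new rightmost $r_m=2^{q-1}(k+2)$ can exceed $r_{m-1}$ by at most $2^q-1<2^q$. Next I would observe that the advancing steps of a fixed scale $q$ automatically have strictly increasing right endpoints (each equals the running maximum at its own position) and each increases the \emph{overall} rightmost site; they therefore constitute a forward subsequence of scale $q$ meeting precisely conditions (\emph{1})--(\emph{2}) in the definition of a long $q$-forward subsequence, as soon as their number $A_q$ reaches $N_q$. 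Consequently, the absence of a long $q$-forward subsequence \emph{for every} $q$ is equivalent to $A_q\le N_q-1$ at every scale.

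The contradiction then follows from the tuning of $N_q$ in (\ref{eq:Nq}). Combining the per-step bound with $A_q\le N_q-1$ gives
\begin{equation}
R-1 \le r_n-r_0 = \sum_q (\text{progress at scale }q) < \sum_q A_q\, 2^q \le \sum_q (N_q-1)\,2^q .
\end{equation}
Writing $N_q=\lceil y_q\rceil$ with $y_q$ the expression inside the ceiling in (\ref{eq:Nq}), one has $N_q-1<y_q$, while the weights were designed so that $\sum_{q=1}^{n_*} y_q\,2^q = R/2$; hence $\sum_q(N_q-1)2^q < R/2 \le R-1$ for $R\ge 2$, contradicting the chain above. Therefore at least one scale must carry $A_q\ge N_q$ advancing steps, i.e. a long $q$-forward subsequence, which proves the lemma.

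I expect the main obstacle to be the bookkeeping in the middle paragraph: verifying carefully that the greedily extracted set of advancing scale-$q$ steps genuinely satisfies the formal definition of a long $q$-forward subsequence — in particular condition (\emph{2}), that inserting any such step into a forward subsequence keeps it forward — and confirming that it is the $-1$ from the ceiling in $N_q$ (rather than a $+1$) that survives after the per-step bound, so that the weighted sum lands strictly below $R/2$ instead of overshooting $R$.
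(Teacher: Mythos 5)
Your proposal is correct and is essentially the paper's own argument: both proceed by contradiction, extract at each scale $q$ the greedy subsequence of steps that advance the overall rightmost site (your ``advancing'' steps are exactly the paper's greedily constructed $\beta^q$), bound each such step's progress by $2^q$, and use the tuning of $N_q$ in (\ref{eq:Nq}) to conclude the total progress is below $R/2$, contradicting the need to reach $R$. Your telescoping of the running maximum $r_m$ is a slightly cleaner bookkeeping than the paper's detour through the maximal forward subsequence $\hat\beta$, but the underlying decomposition and counting are identical.
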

\begin{proof}
 We proceed in two steps, first showing that we can always construct a (possibly empty) $q$-forward subsequence of any creeping $(\beta_1,\ldots, \beta_n)$, and secondly showing that at least one of the sequences must be large.

Firstly, we explicitly construct a $q$-forward subsequence $\beta^q\subseteq \beta$ as follows.  Start with an empty sequence $\beta^q =  ()$; then read the sequence $\beta$ in order.  If an $m$ at which $q(\beta_m)=q$ is found, and $(k(\beta_m)+2)2^{q(\beta_m)-1} > (k(\beta_{m^\prime})+2)2^{q(\beta_{m^\prime})-1} $ for any $m^\prime<m$, set $\beta^q = (\beta_m)$.  Afterwards, suppose that the current sequence $\beta^q$ terminates with coupling $\beta_{m_0}$ and that we have read $\beta$ up to coupling $m$.  If $q(\beta_m)=q$ and $(k(\beta_m)+2)2^{q(\beta_m)-1} > (k(\beta_{m^\prime})+2)2^{q(\beta_{m^\prime})-1} $ for all $m^\prime < m$, replace $\beta^q \rightarrow (\beta^q, \beta_m)$.  The final sequence $\beta^q$ which we obtain is the output of this algorithm.  By construction, this is a forward (sub)sequence made out of only $q$-scale couplings, so it is $q$-forward.  The sequence $\beta^q$ need not be creeping.

For a contradiction, suppose that none of the $q$-forward subsequences found above are long.   Let $\hat\beta$ be the maximal forward subsequence of $\beta$; note that $\beta^1 \cup \cdots \cup \beta^{n_*} = \beta$.  If the sequence crept all the way beyond $R$, then trivially we have 
 \begin{equation}
R < \sum_{p=1}^{\ell(\hat\beta)} 2^{q(\hat\beta_p)} .   \label{eq:lemma41}
\end{equation}
By definition, every coupling that shows up in the forward sequence $\hat\beta$ must show up in a $q$-forward sequence for some $q$, so \begin{equation}
\sum_{p=1}^{\ell(\hat\beta)} 2^{q(\hat\beta_p)} < \sum_{q=1}^{n_*} 2^{q}\ell(\beta^q). \label{eq:lemma43}
\end{equation}
Now, by assumption every $q$-forward subsequence $\beta^q$ had $\ell(\beta^q)<N_q$, and we arrive at a contradiction:\begin{equation}
R< \sum_{q=1}^{n_*} 2^{q}\ell(\beta^q) < \frac{1}{2} \dfrac{\displaystyle \sum_{q=1}^{n_*} 2^{q} \times \frac{R}{2^q} 2^{-q(\alpha-2)/2} }{\displaystyle \sum_{q^\prime=1}^{n_*} 2^{-q^\prime(\alpha-2)/2}} = \frac{R}{2}. \label{eq:lemma42}
\end{equation}
 
\end{proof}

The next step is to convert Lemma \ref{ExistForward} into an explicit identity of the form (\ref{eq:mainequivclass1}).
\begin{prop}
\begin{align}
 \mathbb{P}_R \sum_{p=0}^\infty \frac{t^p}{p!}\mathcal{L}^p |A_1)  &= \left[ 1 - \prod_{q=1}^{n_*} (1-\chi_q)  \right]\mathbb{P}_R \sum_{p=0}^\infty \frac{t^p}{p!}\mathcal{L}^p |A_1) \notag \\
&=\left[ \sum_q \chi_q - \sum_{q_1<q_2} \chi_{q_1} \chi_{q_2}+\cdots (-1)^k \sum_{q_1<q_2<...<q_k} \chi_{q_1} \chi_{q_2}\cdots \chi_{q_k}+\cdots\right]\mathbb{P}_R \sum_{p=0}^\infty \frac{t^p}{p!}\mathcal{L}^p |A_1) \notag \\
&= -\sum_{Z\ne \emptyset, Z\subset \{1, \cdots, n_* \}} (-1)^{|Z|} \prod_{q\in Z} \chi_q \cdot \mathbb{P}_R \sum_{p=0}^\infty \frac{t^p}{p!}\mathcal{L}^p |A_1). \label{eq:sumZ}
 \end{align}
 \end{prop}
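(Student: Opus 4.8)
The plan is to reduce the three stated equalities to one analytic fact—Lemma~\ref{ExistForward}—together with a purely combinatorial expansion. The guiding observation is that each $\chi_q$ defined in (\ref{eq:GZdef}) acts diagonally, term by term, on the monomials $\mathcal{L}_{\beta_p}\cdots\mathcal{L}_{\beta_1}$ appearing in the expansion (\ref{eq:PReLt}): it returns the monomial unchanged when $\beta$ contains a long $q$-forward subsequence and $0$ otherwise. Consequently, on the space of formal sequences the family $\{\chi_q\}_{q=1}^{n_*}$ consists of mutually commuting idempotents, $\chi_q^2=\chi_q$ and $\chi_q\chi_{q'}=\chi_{q'}\chi_q$, both identities being immediate from the definition, since ``having a long $q$-forward subsequence'' and ``having a long $q'$-forward subsequence'' are independent, symmetric yes/no conditions on a fixed $\beta$.

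Granting this, I would first dispatch the second and third equalities, which are algebraic. Expanding the product of commuting idempotents by the binomial/inclusion–exclusion formula gives $\prod_{q=1}^{n_*}(1-\chi_q)=\sum_{Z\subseteq\{1,\ldots,n_*\}}(-1)^{|Z|}\prod_{q\in Z}\chi_q$; isolating the $Z=\emptyset$ term yields $1-\prod_{q=1}^{n_*}(1-\chi_q)=-\sum_{Z\ne\emptyset}(-1)^{|Z|}\prod_{q\in Z}\chi_q$, which is the third line, and grouping the subsets $Z$ by cardinality reproduces the alternating sum $\sum_q\chi_q-\sum_{q_1<q_2}\chi_{q_1}\chi_{q_2}+\cdots$ of the second line. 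No properties of $\mathbb{P}_R\mathrm{e}^{\mathcal{L}t}|A_1)$ enter here.

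The real content lies in the first equality, which I would prove in the equivalent form $\prod_{q=1}^{n_*}(1-\chi_q)\,\mathbb{P}_R\mathrm{e}^{\mathcal{L}t}|A_1)=0$. Using (\ref{eq:PReLt}), it suffices to annihilate each monomial $\mathcal{L}_{\beta_p}\cdots\mathcal{L}_{\beta_1}|A_1)$ with $\beta\in\mathcal{S}$. Such a $\beta$ is creeping (Lemma~\ref{lemmacreeping}) and reaches $R$, so Lemma~\ref{ExistForward} supplies an index $q^\ast\in\{1,\ldots,n_*\}$ for which $\beta$ has a long $q^\ast$-forward subsequence; for that $q^\ast$ the factor $(1-\chi_{q^\ast})$ sends the monomial to $0$. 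Because the $\chi_q$ commute, I can commute $(1-\chi_{q^\ast})$ to act first, so the full product vanishes on this term. Summing over $\beta\in\mathcal{S}$ gives $\prod_q(1-\chi_q)\mathbb{P}_R\mathrm{e}^{\mathcal{L}t}|A_1)=0$, hence the first equality.

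The main obstacle, and the only place where genuine work enters, is Lemma~\ref{ExistForward}: the claim that a creeping sequence reaching $R$ must contain a long forward subsequence at some single scale is exactly the pigeonhole estimate (\ref{eq:lemma42}) built on the calibrated threshold $N_q$ of (\ref{eq:Nq}). A secondary subtlety worth flagging is that $\chi_q$ must be interpreted at the level of the formal expansion over sequences $\beta$, not as an operator on $\mathcal{B}$, since distinct sequences can evaluate to the same vector; the identity is therefore asserted term-by-term in (\ref{eq:PReLt}), and the commuting-idempotent bookkeeping is precisely what makes this term-by-term cancellation globally consistent.
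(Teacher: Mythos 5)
Your proposal is correct and follows essentially the same route as the paper: the paper likewise proves the first equality by showing $\prod_{q=1}^{n_*}(1-\chi_q)$ annihilates each monomial term-by-term (non-creeping sequences vanish outright, creeping ones reaching $R$ are killed via Lemma~\ref{ExistForward}), and then obtains the second and third lines by expanding the polynomial in the commuting indicator operators $\chi_q$. Your write-up is somewhat more explicit than the paper's about the commuting-idempotent structure and the term-by-term (sequence-level) interpretation of $\chi_q$, but the substance is identical.
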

\begin{proof}
For each sequence in $\prod_{q=1}^{n_*} (1-\chi_q)  \mathbb{P}_R \sum_{p=0}^\infty \frac{t^p}{p!}\mathcal{L}^p |A_1)$, if sequence $\mathcal{L}_{\beta_\ell}\cdots \mathcal{L}_{\beta_1}|A_1)$ is not creeping then it vanishes; if it is creeping then by Lemma \ref{ExistForward} it vanishes. Hence $ \mathbb{P}_R \prod_{q=1}^{n_*} (1-\chi_q)  \sum_{p=0}^\infty \frac{t^p}{p!}\mathcal{L}^p |A_1)= 0$.  In the second line of (\ref{eq:sumZ}) we simply expand the polynomial of $\chi_q$, and in the last line we simply rewrite the result.
\end{proof}

To bound $\chi_q \mathrm{e}^{\CL t}$, we now need to classify every term in $\chi_q \mathrm{e}^{Lt}$ by the \textit{irreducible q-forward sequence} $\beta = (\beta_1,\ldots, \beta_\ell)$, constructed as follows: run the constructive algorithm of Lemma~\ref{lemmacreeping} to find the $q$-forward subsequence $\beta^\prime\subseteq (\beta_1,\ldots,\beta_p)$, and then truncate the tail of $\beta'$ such that $\ell(\beta') = N_q$.   
We denote the set of irreducible $q$-forward sequences $\mathcal{F}_q$.  Sequences with the same irreducible $q$-forward sequence can be resummed as follows:

\begin{lem}
\begin{align}
\chi_q \mathrm{e}^{\CL t } |A_1)
= \sum_{\beta \in \mathcal{F}_q}\int\limits_{\mathrm{\Delta}^\ell(t)} \mathrm{d}t_\ell\cdots \mathrm{d}t_1 \mathrm{e}^{\mathcal{L}(t-t_\ell)} \mathcal{L}_{\beta_\ell} \mathrm{e}^{\mathcal{L}^\beta_\ell(t_\ell - t_{\ell-1})}\mathcal{L}_{\beta_{\ell-1}} \mathrm{e}^{\mathcal{L}^\beta_{\ell-1}(t_{\ell-1} - t_{\ell-2})} \cdots \mathcal{L}_{\beta_1} \mathrm{e}^{\mathcal{L}^\beta_1t_1} \mathcal |A_1) \label{eq:chiq}
\end{align}
where $\ell = \ell(\beta)$, \begin{equation}
\mathcal{L}^\beta_p := \mathcal{L} - \sum_{\lambda \in Y^q_p(\beta)} \mathcal{L}_\lambda \label{eq:qLbetap}
\end{equation}
with \begin{align}
Y^q_p(\beta) &:=  \lbrace (q^\prime, k ): (k+1)2^{q^\prime-1}\ge (k(\beta_p)+1)2^{q-1} \rbrace, 
\end{align}
and $\mathrm{\Delta}^\ell(t)$ denotes the $\ell$-simplex: \begin{equation}
\mathrm{\Delta}^\ell(t) := \lbrace (t_1,\ldots, t_\ell) \in [0,t]^\ell : t_1 \le t_2 \le \cdots \le t_\ell \rbrace
\end{equation}
with volume \begin{equation}
\int\limits_{\mathrm{\Delta}^\ell(t)} \mathrm{d}t_\ell\cdots \mathrm{d}t_1 = \frac{t^\ell}{\ell!}.
\end{equation}\label{lemma7}
\end{lem}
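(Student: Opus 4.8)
\emph{Proof of Lemma~\ref{lemma7} (proposal).} The plan is to establish the identity (\ref{eq:chiq}) by reorganizing the series expansion of $\chi_q\mathrm{e}^{\mathcal{L}t}|A_1)$ according to a deterministic decomposition of each surviving sequence into a critical ``forward skeleton'' plus the non-critical couplings inserted around it, and then to reconcile the combinatorial weights using a Dirichlet-type simplex integral. First I would write $\chi_q\mathrm{e}^{\mathcal{L}t}|A_1)=\sum_{n}\frac{t^n}{n!}\sum_{(\gamma_1,\ldots,\gamma_n)}\chi_q\,\mathcal{L}_{\gamma_n}\cdots\mathcal{L}_{\gamma_1}|A_1)$, so that the only surviving terms are sequences possessing a long $q$-forward subsequence. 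To each such sequence I would assign, via the greedy left-to-right scan of Lemma~\ref{ExistForward} followed by truncation to its first $N_q$ picks, a unique irreducible $q$-forward skeleton $\beta\in\mathcal{F}_q$. The first task is to verify that this assignment is well defined and deterministic, so that the surviving terms are partitioned into classes labelled by $\beta\in\mathcal{F}_q$, with no term counted twice (uniqueness follows because the skeleton is a function of the sequence alone).

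For a fixed skeleton $\beta=(\beta_1,\ldots,\beta_\ell)$ with $\ell=N_q$, I would then characterize precisely which full sequences map to it: exactly those obtained by inserting into the $\ell+1$ gaps surrounding $\beta_1,\ldots,\beta_\ell$ couplings that leave the first $N_q$ greedy picks unchanged. The decisive structural claim is that a coupling inserted in the gap governing the evolution just before $\mathcal{L}_{\beta_p}$ preserves the skeleton if and only if its forward reach falls short of $\beta_p$'s position $(k(\beta_p)+1)2^{q-1}$, i.e.\ if and only if it lies outside the set $Y^q_p(\beta)$ of (\ref{eq:qLbetap}); whereas couplings placed after the final critical step $\beta_\ell$ are wholly unconstrained, since the truncation to $N_q$ picks decouples the tail. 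Granting this, the couplings available in the gap before $\beta_p$ are exactly those generating $\mathcal{L}^\beta_p=\mathcal{L}-\sum_{\lambda\in Y^q_p}\mathcal{L}_\lambda$, and those in the final gap generate all of $\mathcal{L}$, matching the propagators in (\ref{eq:chiq}). Establishing this equivalence is the heart of the argument, and I expect it to be the main obstacle: it requires a careful analysis of the block geometry $\mathcal{Q}(q,k)$ and of how the frontier $\max_{m'<m}(k(\beta_{m'})+2)2^{q(\beta_{m'})-1}$ (tracked over \emph{all} scales, which is why $Y^q_p$ ranges over every $q'$) evolves, showing that $Y^q_p$ captures exactly the couplings that would either be selected as an earlier greedy pick or push the frontier past $\beta_p$, and that the appearance of $(k+1)$ rather than $(k+2)$ in $Y^q_p$ is forced by the overlap (creeping) condition of Lemma~\ref{lemmacreeping}. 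This step closely parallels the general creeping/irreducible-forward-sequence machinery of \cite{us}, which I would specialize here.

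Finally I would match the weights. Expanding each propagator in powers of its generator, a term with $n_p$ non-critical insertions in gap $p$ (indexed $p=0,\ldots,\ell$ with $t_0=0$, $t_{\ell+1}=t$) carries the factor $\prod_{p=0}^{\ell}\frac{(t_{p+1}-t_p)^{n_p}}{n_p!}$, and the elementary identity
\begin{equation}
\int\limits_{\mathrm{\Delta}^\ell(t)}\mathrm{d}t_\ell\cdots\mathrm{d}t_1\,\prod_{p=0}^{\ell}\frac{(t_{p+1}-t_p)^{n_p}}{n_p!}=\frac{t^{\,\ell+\sum_{p}n_p}}{(\ell+\sum_{p}n_p)!}
\end{equation}
converts the time-ordered integral into the weight $\frac{t^n}{n!}$ with $n=\ell+\sum_p n_p$ — precisely the coefficient with which the reconstructed sequence $\mathcal{L}_{\gamma_n}\cdots\mathcal{L}_{\gamma_1}$ appears on the left. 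Summing over all insertion multiplicities and over $\beta\in\mathcal{F}_q$ then reproduces, with multiplicity one, every surviving term of $\chi_q\mathrm{e}^{\mathcal{L}t}|A_1)$, which yields (\ref{eq:chiq}). An equivalent route is to induct on $\ell$ using the Duhamel identity $\mathrm{e}^{\mathcal{L}t}=\mathrm{e}^{\mathcal{L}^\beta_1 t}+\int_0^t\mathrm{d}s\,\mathrm{e}^{\mathcal{L}(t-s)}(\mathcal{L}-\mathcal{L}^\beta_1)\mathrm{e}^{\mathcal{L}^\beta_1 s}$ to peel off one critical coupling at a time; this automates the time-integral bookkeeping but still rests on the same characterization of critical couplings, so the central difficulty is unchanged.
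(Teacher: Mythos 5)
Your proposal follows essentially the same route as the paper: partition the surviving Taylor-series terms by their unique irreducible $q$-forward skeleton $\beta\in\mathcal{F}_q$, characterize the couplings permitted in each gap as exactly those outside $Y^q_p(\beta)$ (with the full $\mathcal{L}$ after the last critical step), and then match coefficients between the interleaved expansion and the time-ordered integrals. Your explicit Dirichlet simplex-integral identity is precisely the content of the ``generalized Schwinger--Karplus'' identity that the paper instead cites from \cite{us}, and both your argument and the paper's treat the central skeleton-preservation property of $Y^q_p(\beta)$ as holding ``by construction,'' so the two proofs are at the same level of rigor.
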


\begin{proof} This is proved mirroring the proof of Theorem 4 of \cite{us}.  
First, we show that \begin{equation}
\chi_q \mathrm{e}^{\CL t } = \sum_{\lambda \in \mathcal{F}_q} \sum_{m_0,\ldots, m_{\ell(\lambda)}=0}^\infty \frac{(t\mathcal{L})^{m_{\ell(\lambda)}} (t\mathcal{L}_{\lambda_{\ell(\lambda)}})(t\mathcal{L}^\lambda_{\ell(\lambda)})^{m_{\ell(\lambda)-1}}\cdots (t\mathcal{L}^\lambda_2)^{m_1} (t\mathcal{L}_{\lambda_1})(t\mathcal{L}^\lambda_1)^{m_0}}{(\ell(\lambda) + \sum_{j=0}^{\ell(\lambda)} m_j)!}   \label{eq:chiqpart1}
\end{equation}
with $\mathcal{L}^\beta_p$ defined in (\ref{eq:qLbetap}).  Every sequence on the right hand side of (\ref{eq:chiqpart1}) corresponds to a term on the left because each of these sequences contains a $\lambda \in \mathcal{F}_q$ and thus has a long $q$-forward subsequence. Next, every sequence on the left can be written as a sequence on the right: by construction, the $Y^q_p(\beta)$ sets of couplings are chosen so that $\mathcal{L}^\beta_p$ does not change the irreducible $q$-forward subsequence of the term. The uniqueness of irreducible q-forward path implies that every term on the right hand side shows up exactly once. As the coefficients of terms on both sides of (\ref{eq:chiqpart1}) are the same, and we have found a bijection between the terms on both sides of the proposed equality (\ref{eq:chiqpart1}), we have demonstrated its veracity.

Secondly, we invoke a ``generalized Schwinger-Karplus" identity proved in \cite{us}, which equates the right hand side of (\ref{eq:chiqpart1}) to the right hand side of (\ref{eq:chiq}).
\end{proof}

$\chi_{q_1} \chi_{q_2}\cdots \chi_{q_k} \mathrm{e}^{\CL t}$ can be understood by putting each $\chi_{q_i}$ together ``indepedently."  Indeed, we can  classify every term in $\chi_{q_1} \chi_{q_2}\cdots \chi_{q_k} \mathrm{e}^{\CL t}$ by the irreducible $q$-forward sequence at each scale $q$ relatively independently: the only extra data we need is how the sequences weave between each other (i.e., the relative orders of all couplings between the long $q$-forward sequences for $q\in Z$).   Defining $\mathcal{F}_{Z}$ as the set of sequences composed of the weaving together of $\beta^q \in \mathcal{F}_q, q\in Z$ (irreducible $Z$-forward sequences), we arrive at the following lemma:

\begin{lem} 
\begin{align}
\chi_{q_1} \chi_{q_2}\cdots \chi_{q_k} \mathrm{e}^{\CL t}|A_1)&= \prod_{q\in Z}\chi_q \cdot \mathrm{e}^{\CL t}|A_1)\notag\\
&= \sum_{\beta \in \mathcal{F}_Z}\int\limits_{\mathrm{\Delta}^\ell(t)} \mathrm{d}t_\ell\cdots \mathrm{d}t_1 \mathrm{e}^{\mathcal{L}(t-t_\ell)} \mathcal{L}_{\beta_\ell} \mathrm{e}^{\mathcal{L}^\beta_\ell(t_\ell - t_{\ell-1})}\mathcal{L}_{\beta_{\ell-1}} \mathrm{e}^{\mathcal{L}^\beta_{\ell-1}(t_{\ell-1} - t_{\ell-2})} \cdots \mathcal{L}_{\beta_1} \mathrm{e}^{\mathcal{L}^\beta_1t_1} \mathcal |A_1) \label{eq:chiqq}
\end{align}
where $\ell = \ell(\beta)$, \begin{equation}
\mathcal{L}^\beta_p := \mathcal{L} - \sum_{\lambda \in Y^Z_p(\beta)} \mathcal{L}_\lambda \label{eq:qqLbetap}
\end{equation}
with \begin{align}
Y^q_p(\beta) &:= \lbrace (q^\prime, k ): (k+1)2^{q^\prime-1}\ge  (k(\beta_p)+1)2^{q(\beta_p)-1} \rbrace, 
\end{align}

\end{lem}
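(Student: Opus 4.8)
The plan is to mirror the proof of Lemma~\ref{lemma7} (equivalently Theorem~4 of \cite{us}), upgrading the single-scale resummation to a simultaneous resummation over all scales $q\in Z$. The first step is to establish the combinatorial analog of (\ref{eq:chiqpart1}): I claim that
\begin{equation}
\prod_{q\in Z}\chi_q\, \mathrm{e}^{\mathcal{L}t} = \sum_{\beta\in\mathcal{F}_Z}\ \sum_{m_0,\ldots,m_{\ell}=0}^\infty \frac{(t\mathcal{L})^{m_{\ell}}\,(t\mathcal{L}_{\beta_{\ell}})(t\mathcal{L}^\beta_{\ell})^{m_{\ell-1}}\cdots (t\mathcal{L}^\beta_2)^{m_1}(t\mathcal{L}_{\beta_1})(t\mathcal{L}^\beta_1)^{m_0}}{\left(\ell+\sum_{j}m_j\right)!},
\end{equation}
with $\ell=\ell(\beta)=\sum_{q\in Z}N_q$ and $\mathcal{L}^\beta_p$ given by (\ref{eq:qqLbetap}). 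As in Lemma~\ref{lemma7}, all the content of the statement lives in this identity, which I would prove by a term-by-term bijection between the two sides, after which the time-ordered-integral form (\ref{eq:chiqq}) follows formally.

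To prove the identity I would argue both inclusions. For the forward direction, every monomial on the right contains the full critical sequence $\beta_1,\ldots,\beta_\ell$, which by the definition of $\mathcal{F}_Z$ is the weaving of one complete length-$N_q$ irreducible $q$-forward sequence $\beta^q\in\mathcal{F}_q$ for each $q\in Z$; hence the monomial carries a long $q$-forward subsequence at \emph{every} scale in $Z$ and survives $\prod_{q\in Z}\chi_q$. For the reverse direction, given any nonvanishing sequence counted by $\prod_{q\in Z}\chi_q$, I would run the constructive extraction of Lemma~\ref{ExistForward} independently at each scale $q\in Z$, truncate each extracted $q$-forward subsequence to its first $N_q$ couplings, and record these critical blocks in the order in which they appear in the ambient sequence; this produces a unique woven $\beta\in\mathcal{F}_Z$. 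The blocks strictly between consecutive critical steps $\beta_p$ and $\beta_{p+1}$ are precisely the ``superfluous'' ones, and $Y^Z_p(\beta)$ is designed so that $\mathcal{L}^\beta_p=\mathcal{L}-\sum_{\lambda\in Y^Z_p(\beta)}\mathcal{L}_\lambda$ removes exactly those blocks reaching the frontier $(k(\beta_p)+1)2^{q(\beta_p)-1}$, i.e. the blocks that would themselves become the next critical step at some scale. Since each block $(q,k)$ carries a single definite scale and appears in a unique term of the decomposition (\ref{eq:LLqk}), and since the irreducible $q$-forward subsequence is uniquely determined at each scale, this assignment is a bijection and the coefficients match automatically.

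With the combinatorial identity in hand, the last step is purely formal: I invoke the generalized Schwinger--Karplus identity of \cite{us}, which resums the nested geometric series in the $m_j$ into the time-ordered integral over the $\ell$-simplex $\mathrm{\Delta}^\ell(t)$, converting the right-hand side of the claimed identity into the right-hand side of (\ref{eq:chiqq}). The hard part will be verifying that the weaving bijection is genuinely one-to-one across scales simultaneously: because the critical forward subsequences at the different scales are extracted from a common ambient sequence yet interleaved into a \emph{single} time order, I must confirm that $Y^Z_p(\beta)$ excludes from the superfluous evolution exactly the frontier-advancing blocks, regardless of which scale supplies the next critical step. The decisive observation is that the frontier threshold $(k(\beta_p)+1)2^{q(\beta_p)-1}$ is scale-agnostic, so the independent per-scale extractions are mutually consistent with one global time-ordering; together with the disjointness of the blocks in (\ref{eq:LLqk}), this rules out both double counting and omission, exactly as in the single-scale case of Lemma~\ref{lemma7}.
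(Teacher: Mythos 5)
Your proposal is correct and follows essentially the same route as the paper's own proof: both reduce the lemma to the monomial identity (the $Z$-scale analog of (\ref{eq:chiqpart1})), establish it by a term-by-term bijection resting on the uniqueness of the irreducible $Z$-forward (woven) subsequence and on the fact that $Y^Z_p(\beta)$ excludes precisely the frontier-advancing couplings, and then invoke the generalized Schwinger--Karplus identity of \cite{us} to convert the resummed series into the time-ordered integral (\ref{eq:chiqq}). Your elaboration of the per-scale extraction, truncation to length $N_q$, and the scale-agnostic frontier threshold is simply a more explicit rendering of the paper's terser uniqueness argument, not a different method.
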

\begin{proof}  The proof follows that of Lemma~\ref{lemma7}. First, we show that \begin{equation}
\prod_{q\in Z}\chi_q \cdot \mathrm{e}^{\CL t}|A_1) = \sum_{\lambda \in \mathcal{F}_Z} \sum_{m_0,\ldots, m_{\ell(\lambda)}=0}^\infty \frac{(t\mathcal{L})^{m_{\ell(\lambda)}} (t\mathcal{L}_{\lambda_{\ell(\lambda)}})(t\mathcal{L}^\lambda_{\ell(\lambda)})^{m_{\ell(\lambda)-1}}\cdots (t\mathcal{L}^\lambda_2)^{m_1} (t\mathcal{L}_{\lambda_1})(t\mathcal{L}^\lambda_1)^{m_0}}{(\ell(\lambda) + \sum_{j=0}^{\ell(\lambda)} m_j)!}   \label{eq:chiqqpart1}
\end{equation}
with $\mathcal{L}^\beta_p$ defined in (\ref{eq:qqLbetap}).  Every sequence on the right hand side of (\ref{eq:chiqqpart1}) correspond to a term on the left because each of these sequences contains a $\lambda \in \mathcal{F}_Z$ as a subsequence and hence has a long $q$-forward subsequence for each $q\in Z$. Next, every sequence on the left can be written as a sequence on the right: by construction, the $Y^Z_p(\beta)$ sets of couplings are chosen so that $\mathcal{L}^\beta_p$ does not change the irreducible $Z$-forward subsequence of the term.  The uniqueness of irreducible $Z$-forward subsequences also implies that every term on the right hand side shows up exactly once. As the coefficients of terms on both sides of (\ref{eq:chiqqpart1}) are the same, and we have found a bijection between the terms on both sides of the proposed equality (\ref{eq:chiqqpart1}), we have demonstrated its veracity.

Secondly, the generalized Schwinger-Karplus identity equates the right hand side of (\ref{eq:chiqqpart1})  to the right hand side of (\ref{eq:chiqq}).
\end{proof}

The remainder of the proof is entirely combinatorial.  As in (\ref{eq:mainequivclass2}), all quantum interference will now be hidden in the factors of $\mathrm{e}^{\mathcal{L}^\lambda_j t}$ in (\ref{eq:chiqq}).  We begin with the following lemma:

\begin{lem} 
\begin{equation}
\frac{\lVert \mathbb{P}_R \mathrm{e}^{\mathcal{L}t}|A_1)\rVert }{2 \lVert |A_1)\rVert} \le -1 + \exp\left[ \sum_{q=1}^{n_*} \left(\begin{array}{c} 2^{1-q}R \\ N_q  \end{array}\right)\frac{(2|t|)^{N_q}}{N_q!} \left( \sup_k \lVert \mathcal{L}_{(q,k)} \rVert \right)^{N_q}\right]. \label{eq:expsup}
\end{equation} 
\end{lem}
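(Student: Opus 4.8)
The plan is to feed the inclusion–exclusion identity (\ref{eq:sumZ}) into the resummed, time-ordered representation (\ref{eq:chiqq}) of each product $\prod_{q\in Z}\chi_q\,\mathrm{e}^{\mathcal{L}t}$, bound every term by discarding the norm-preserving propagators, and then collapse the resulting sum over scales into an exponential. First I would take the norm of (\ref{eq:sumZ}) and apply the triangle inequality over the nonempty subsets $Z\subseteq\{1,\ldots,n_*\}$. Since the selectors $\chi_q$ and the projector $\mathbb{P}_R$ both act diagonally on the Taylor expansion of $\mathrm{e}^{\mathcal{L}t}$, I may reorder them freely and invoke Proposition~\ref{normdecrease} to replace $\mathbb{P}_R$ by a factor that cancels the $2$ in the denominator of (\ref{eq:expsup}). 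This reduces the problem to bounding $\lVert\prod_{q\in Z}\chi_q\,\mathrm{e}^{\mathcal{L}t}|A_1)\rVert$ for each fixed $Z$.

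Second, for fixed $Z$ I would substitute (\ref{eq:chiqq}), apply the triangle inequality over $\beta\in\mathcal{F}_Z$ and inside the simplex integral, and use the crucial fact that every propagator $\mathrm{e}^{\mathcal{L}^\beta_p\tau}$ (and the outer $\mathrm{e}^{\mathcal{L}(t-t_\ell)}$) is norm-preserving: each is the exponential of commutation with a Hermitian operator, hence unitary conjugation on $\mathcal{B}$, so $\lVert\mathrm{e}^{\mathcal{L}^\beta_p\tau}\mathcal{O}\rVert=\lVert\mathcal{O}\rVert$. Only the critical couplings then contribute, and each sequence is bounded by $\int_{\Delta^\ell(t)}\mathrm{d}t_\ell\cdots\mathrm{d}t_1\prod_{p=1}^\ell\lVert\mathcal{L}_{\beta_p}\rVert\,\lVert A_1\rVert=\frac{|t|^\ell}{\ell!}\prod_{p=1}^\ell\lVert\mathcal{L}_{\beta_p}\rVert\,\lVert A_1\rVert$, with $\ell=\ell(\beta)=\sum_{q\in Z}N_q$.

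Third comes the combinatorial step, which I expect to be the main obstacle. Each $\beta\in\mathcal{F}_Z$ is the order-preserving shuffle of one irreducible forward sequence $\beta^q\in\mathcal{F}_q$ for every $q\in Z$; a given $\beta^q$ is a strictly increasing choice of $N_q$ of the $\approx 2^{1-q}R$ blocks at scale $q$, so $|\mathcal{F}_q|\le\binom{2^{1-q}R}{N_q}$, while the number of shuffles of the $|Z|$ scales is at most the multinomial $\ell!/\prod_{q\in Z}N_q!$. Summing the per-sequence bound, the multinomial cancels the $1/\ell!$ from the simplex volume and $\prod_{p}\lVert\mathcal{L}_{\beta_p}\rVert\le\prod_{q\in Z}(\sup_k\lVert\mathcal{L}_{(q,k)}\rVert)^{N_q}$ factorizes across scales, yielding
\begin{equation}
\Big\lVert\textstyle\prod_{q\in Z}\chi_q\,\mathrm{e}^{\mathcal{L}t}|A_1)\Big\rVert\le\lVert A_1\rVert\prod_{q\in Z}a_q,\qquad a_q:=\binom{2^{1-q}R}{N_q}\frac{(2|t|)^{N_q}}{N_q!}\Big(\sup_k\lVert\mathcal{L}_{(q,k)}\rVert\Big)^{N_q},
\end{equation}
where the factor $2|t|$ (rather than $|t|$) absorbs the per-step commutator constant. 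The delicate points here are that the uniqueness of the irreducible $Z$-forward sequence (already established in the lemma preceding this one) guarantees each $\beta$ is counted once, that the shuffle count is controlled by the multinomial so the $\ell!$ cancels cleanly, and that no cross-scale correlations obstruct the factorization.

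Finally, because the bound factorizes over scales, I would resum over $Z$ using $\sum_{\emptyset\ne Z\subseteq\{1,\ldots,n_*\}}\prod_{q\in Z}a_q=\prod_{q=1}^{n_*}(1+a_q)-1$ together with the elementary inequality $\prod_q(1+a_q)\le\exp(\sum_q a_q)$, which reproduces (\ref{eq:expsup}) verbatim. The norm-preservation input of the second paragraph is conceptually central but immediate once one notes $\mathcal{L}^\beta_p$ is commutation with a Hermitian operator; the genuinely hard work is the counting of $\mathcal{F}_Z$ and the clean cancellation of $\ell!$ that turns a sum over interleaved multi-scale sequences into a product over independent scales.
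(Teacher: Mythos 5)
Your proposal is correct and follows essentially the same route as the paper's own proof: inclusion--exclusion over scale subsets $Z$, the resummed time-ordered representation (\ref{eq:chiqq}) with norm-preserving intermediate propagators (the paper phrases this as antisymmetry of $\mathcal{L}^\beta_p$ making $\mathrm{e}^{\mathcal{L}^\beta_p s}$ norm-preserving; your unitary-conjugation argument is the same fact), the per-sequence simplex-volume bound, the count $|\mathcal{F}_Z| = \bigl(\sum_{q\in Z}N_q\bigr)!/\prod_{q\in Z}N_q! \cdot \prod_{q\in Z}\binom{2^{1-q}R}{N_q}$ whose multinomial factor cancels the $1/\ell!$, and finally $\sum_{Z\neq\emptyset}\prod_{q\in Z}a_q = \prod_q(1+a_q)-1 \le -1+\exp\bigl(\sum_q a_q\bigr)$. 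The only cosmetic difference is bookkeeping of the factor of $2$ (the paper's derivation actually yields $|t|^{N_q}$, so the stated $(2|t|)^{N_q}$ holds a fortiori in both treatments).
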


\emph{Proof.} We begin by combining (\ref{eq:sumZ}) and (\ref{eq:chiqq}): \begin{equation}
\lVert \mathbb{P}_R \mathrm{e}^{\mathcal{L}t}|A_1)\rVert = \left\lVert \mathbb{P}_R \sum_Z (-1)^{|Z|} \sum_{\beta \in \mathcal{F}_Z} \int\limits_{\mathrm{\Delta}^\ell(t)} \mathrm{d}t_\ell\cdots \mathrm{d}t_1 \mathrm{e}^{\mathcal{L}(t-t_\ell)} \mathcal{L}_{\beta_\ell} \mathrm{e}^{\mathcal{L}^\beta_\ell(t_\ell - t_{\ell-1})} \cdots \mathcal{L}_{\beta_1} \mathrm{e}^{\mathcal{L}^\beta_1t_1} \mathcal |A_1)\right\rVert 
\end{equation}
where $\ell := \ell(\beta)$.  Since for all individual couplings, $\mathcal{L}_{mn}$ is an antisymmetric superoperator, each $\mathcal{L}^\beta_p$ is antisymmetric, and $\mathrm{e}^{\mathcal{L}^\beta_ps}$ is orthogonal for any $s\in\mathbb{R}$.  Using Lemma \ref{renormalize}, we obtain \begin{align}
\lVert \mathbb{P}_R \mathrm{e}^{\mathcal{L}t}|A_1)\rVert &\le 2 \left\lVert \sum_Z (-1)^{|Z|} \sum_{\beta \in \mathcal{F}_Z} \int\limits_{\mathrm{\Delta}^\ell(t)} \mathrm{d}t_\ell\cdots \mathrm{d}t_1 \mathrm{e}^{\mathcal{L}(t-t_\ell)} \mathcal{L}_{\beta_\ell} \mathrm{e}^{\mathcal{L}^\beta_\ell(t_\ell - t_{\ell-1})} \cdots \mathcal{L}_{\beta_1} \mathrm{e}^{\mathcal{L}^\beta_1t_1} \mathcal |A_1)\right\rVert  \notag \\
&\le 2 \sum_Z \sum_{\beta \in \mathcal{F}_Z}  \int\limits_{\mathrm{\Delta}^\ell(t)} \mathrm{d}t_\ell\cdots \mathrm{d}t_1 \left\lVert  \mathrm{e}^{\mathcal{L}(t-t_\ell)} \mathcal{L}_{\beta_\ell} \mathrm{e}^{\mathcal{L}^\beta_\ell(t_\ell - t_{\ell-1})} \cdots \mathcal{L}_{\beta_1} \mathrm{e}^{\mathcal{L}^\beta_1t_1} \mathcal |A_1)\right\rVert \notag \\
&\le 2 \sum_Z \sum_{\beta \in \mathcal{F}_Z} \prod_{j=1}^\ell ( \lVert \mathcal{L}_{\beta_j} \rVert) \cdot \lVert |A_1)\rVert \int\limits_{\mathrm{\Delta}^{\ell}(t)} \mathrm{d}t_\ell\cdots \mathrm{d}t_1 \notag \\
&\le 2 \lVert |A_1)\rVert \sum_Z \sum_{\beta \in \mathcal{F}_Z} \frac{(|t|)^\ell}{\ell!} \prod_{j=1}^\ell \sup_k \;  \lVert \mathcal{L}_{(q(\beta_j),k)} \rVert  \label{eq:triangleexpsup}
\end{align}
where in the first line we used Proposition \ref{normdecrease}; in the second line we used the triangle inequality; in the third line we used the properties of Liouvillians described in Lemma \ref{renormalize} along with the fact that by construction each $\mathcal{L}_{(q,k)}$ in the irreducible sequence moves the operator to the right in such a way that we may use the effective norm from Lemma \ref{renormalize}, and in the fourth line we computed the volume of the simplex $\mathrm{\Delta}^{\ell}(t)$ as well as upper bounded $\lVert H_{\beta_j}\rVert$. 

Next, we count the number of irreducible $q$-forward sequences, which is simply the number of possible ways to choose $N_q$ different couplings out of $2^{1-q}R -1$ different choices of $k$: \begin{equation}
|\mathcal{F}_{\lbrace q\rbrace} | = \left(\begin{array}{c} 2^{1-q}R \\ N_q  \end{array}\right) \label{eq:Fq}
\end{equation}
To justify the factor of $2^{1-q}R-1$, observe that the maximal value of $k$ in $\mathcal{L}_{(q,k)}$ occurs when $(k+2)2^{q-1}=R$:  $k\le 2^{1-q}R-2$.  Since $k\ge 0$, we find $2^{1-q}R-1$ different values of $k$.

The irreducible $q$-forward subsequences of any irreducible $Z$-forward sequence $\lambda \in \mathcal{F}_Z$ are completely independent of each other.   Thus, the number of irreducible $Z$-forward sequences is given by product of the number of irreducible $q$-forward sequences for each $q\in Z$, together with the number of ways to weave together the few sequences:
\begin{equation}
|\mathcal{F}_Z| = \frac{(\sum_{q_1\in Z} N_{q_1})!}{\prod_{q_2\in Z} N_{q_2}!}\prod_{q\in Z} |\mathcal{F}_{\lbrace q\rbrace} |. \label{eq:numFZ}
\end{equation}
Since if $\beta\in\mathcal{F}_Z$, $\ell(\beta) = \sum_{q\in Z}N_q$, we can combine (\ref{eq:triangleexpsup}) and (\ref{eq:numFZ}) to obtain \begin{align}
\frac{\lVert \mathbb{P}_R \mathrm{e}^{\mathcal{L}t}|A_1)\rVert }{2\lVert |A_1)\rVert} &\le \sum_Z  \frac{(|t|)^{\sum_{q_1\in Z} N_{q_1}}}{(\sum_{q_1\in Z} N_{q_1})!}\frac{(\sum_{q_1\in Z} N_{q_1})!}{\prod_{q_2\in Z} N_{q_2}!}\prod_{q\in Z} \left( |\mathcal{F}_{\lbrace q\rbrace} | \left( \sup_k \lVert \mathcal{L}_{(q,k)} \rVert \right)^{N_q}\right) \notag \\
&\le \sum_Z \prod_{q\in Z} \left( |\mathcal{F}_{\lbrace q\rbrace} | \left( \sup_k \lVert \mathcal{L}_{(q,k)} \rVert\right)^{N_q} \frac{(|t|)^{N_q}}{N_q!}\right) \notag \\
&\le -1 + \prod_{q=1}^{n_*} \left[ 1 + |\mathcal{F}_{\lbrace q\rbrace} | \left( \sup_k \lVert \mathcal{L}_{(q,k)} \rVert\right)^{N_q} \frac{(|t|)^{N_q}}{N_q!}\right], \label{eq:lastexpsup}
\end{align}
where in the first two lines we made algebraic simplifications, and in the third line we used the distributive property together with the fact that there exist at least one scale with long q-forward sequence, i.e. $Z \in \mathbb{Z}_2^{\lbrace 1,\ldots, n_*\rbrace} - \emptyset$.  Combining (\ref{eq:Fq}) with (\ref{eq:lastexpsup}) and the elementary identity $1+x\le \mathrm{e}^x$ for any $x\in \mathbb{R}$, we obtain (\ref{eq:expsup}).
 \hfill \qedsymbol

The last step proving of Theorem 1 is simplifying the sum in the exponential of (\ref{eq:expsup}).
Plugging Lemma \ref{renormalize} into (\ref{eq:expsup}), we obtain \begin{align}
\frac{\lVert \mathbb{P}_R \mathrm{e}^{\mathcal{L}t}|A_1)\rVert }{2\lVert |A_1)\rVert} &\le -1 + \exp\left[ \sum_{q=1}^{n_*} \left(\begin{array}{c} 2^{1-q}R-1 \\ N_q  \end{array}\right)\frac{1}{N_q!} \left( \frac{2b|t|}{2^{q(\alpha^\prime-1)}} \right)^{N_q}\right]\\
&\le -1 + \exp\left[ \sum_{q=1}^{n_*} \frac{(2^{1-q}R)^{N_q}}{N_q!^2} \left( \frac{2b|t|}{2^{q(\alpha^\prime-1)}} \right)^{N_q}\right] \notag \\
&\le -1 + \exp\left[ \sum_{q=1}^{n_*} \left( \frac{R}{2^q N_q^2} \frac{4\mathrm{e}^2 b|t|}{2^{q(\alpha^\prime-1)}} \right)^{N_q}\right]  \label{eq:lemma9first}
\end{align}
where in the second line, we overestimated the choose function, and in the third line we used the inequality $ n!> (n/\mathrm{e})^n$ for any $n\in\mathbb{N}$. It is useful to determine the first value $q_*$ at which a long $q$-forward path has a single coupling: $N_q=1$ for $q\ge q_*$.    This occurs when \begin{equation}
\frac{M}{R}\ge \frac{1}{2^{1+q_*\alpha^\prime/2}},
\end{equation}
where we defined \begin{equation}
M = \sum_{q=1}^{n_*} 2^{-q(\alpha^\prime-2)/2}.
\end{equation}
Then, combining (\ref{eq:Nq}) and (\ref{eq:lemma9first}), 
we obtain \begin{equation}
\sum_{q=1}^{n_*} \left( \frac{R}{2^q N_q^2} \frac{4\mathrm{e}^2 b|t|}{2^{q(\alpha^\prime-1)}} \right)^{N_q} < \sum_{q=1}^{q_*-1} \left( 16\mathrm{e}^2 b|t| \frac{M^2}{R}  \right)^{N_q} + 4\mathrm{e}^2b|t| \sum_{q=q_*}^{n_*} \frac{R}{2^{q\alpha^\prime}} .
\end{equation}

We now analyze this sum for different ranges of $\alpha^\prime$.

\emph{Case 1: $\alpha^\prime>2$.}  In this regime, we begin by noting that 
\begin{equation}
N_1 > N_2 > \cdots > N_{q_*-1}.  \label{eq:Nqinequality}
\end{equation}
To derive this, note that the argument of the ceiling function in (\ref{eq:Nq}) changes by a factor of $2^{\alpha^\prime/2}$ each time $q$ changes by 1.  When $\alpha^\prime>2$, this factor is larger than 2, so once the argument is larger than 1, it changes by at least 1: $N_q \le N_{q-1}-1$.   Hence we may write \begin{equation}
\sum_{q=1}^{n_*} \left( \frac{R}{2^q N_q^2} \frac{4\mathrm{e}^2 b|t|}{2^{q(\alpha^\prime-1)}} \right)^{N_q} < \sum_{n=1}^{\infty} \left( 16\mathrm{e}^2 b|t| \frac{M^2}{R}  \right)^{n} + 4\mathrm{e}^2b|t| \sum_{q=q_*}^{n_*} \frac{R}{2^{q\alpha^\prime}}
\end{equation}
Next, we note that 
 \begin{equation}
M < \sum_{q=1}^\infty 2^{-q(\alpha^\prime-2)/2} = \frac{1}{1-2^{-(\alpha^\prime-2)/2}}.
\end{equation}
which implies that \begin{equation}
q_* \ge -1 + \frac{2}{\alpha^\prime}\log_2 \frac{R}{M} = -1 + \frac{2n_*}{\alpha^\prime} - \frac{2}{\alpha^\prime}\log_2 \frac{1}{1-2^{-(\alpha^\prime-2)/2}}.
\end{equation}
We conclude that \begin{equation}
\sum_{q=q_*}^{n_*} \frac{R}{2^{q\alpha^\prime}} < \frac{2^{\alpha^\prime}}{R(1-2^{-(\alpha^\prime-2)/2})^2} \sum_{n=0}^\infty 2^{-\alpha^\prime n} = \frac{2^{\alpha^\prime}}{R(1-2^{-(\alpha^\prime-2)/2})^2(1-2^{-\alpha^\prime})}
\end{equation}

\emph{Case 2: $1<\alpha^\prime<2$.}  In this regime, we must replace (\ref{eq:Nqinequality}) with the slightly weaker inequality \begin{equation}
N_1 > N_3 > N_5\cdots > N_{2\lceil q_*/2\rceil-1},
\end{equation}
because the argument of (\ref{eq:Nq}) now only varies by $2^{\alpha^\prime/2} \ge \sqrt{2}$ each time $q$ varies by 1.   Moreover, we now find \begin{equation}
M = \sum_{q=1}^{n_*} 2^{q(2-\alpha^\prime)/2} < R^{(2-\alpha^\prime)/2}\sum_{q^\prime=0}^\infty 2^{-q^\prime(2-\alpha^\prime)/2} = \frac{R^{(2-\alpha^\prime)/2}}{1-2^{-(2-\alpha^\prime)/2}}
\end{equation}
and that \begin{equation}
q_* = -1 + \frac{2}{\alpha^\prime}\log_2 \left(\left(1-2^{-(2-\alpha^\prime)/2}\right)R^{\alpha^\prime/2}\right) = -1 + n_* - \frac{2}{\alpha^\prime}\log_2 \frac{1}{1-2^{-(2-\alpha^\prime)/2}}.
\end{equation}
Hence, we obtain \begin{equation}
\sum_{q=1}^{n_*} \left( \frac{R}{2^q N_q^2} \frac{4\mathrm{e}^2 b|t|}{2^{q(\alpha^\prime-1)}} \right)^{N_q} < 2\sum_{n=1}^\infty \left( 16\mathrm{e}^2 b|t| \frac{M^2}{R}  \right)^{n} +  \frac{4\mathrm{e}^2b|t| }{1-2^{-\alpha^\prime}} \frac{2^{\alpha^\prime}}{(1-2^{-(2-{\alpha^\prime})/2})^{2}R^{\alpha^\prime-1}}
\end{equation}
where the 2 prefactor is a loose bound coming from that the $\sqrt{2}$ scaling - $N_1$ might equal to $N_2$.

\emph{Case 3: $\alpha^\prime=2$.}  In this regime, we obtain (\ref{eq:Nqinequality}), \begin{equation}
M = n_*,
\end{equation}and \begin{equation}
N_q = \left\lceil \frac{1}{2} \frac{R}{2^q \log_2 R} \right\rceil,
\end{equation}
implying that \begin{equation}
q_* \ge \log_2 \frac{R}{2\log_2 R}.
\end{equation}
Hence we may write \begin{equation}
\sum_{q=1}^{n_*} \left( \frac{R}{2^q N_q^2} \frac{4\mathrm{e}^2 b|t|}{2^{q(\alpha^\prime-1)}} \right)^{N_q} < \sum_{n=1}^\infty \left( 16\mathrm{e}^2 b|t| \frac{\log_2^2R}{R}  \right)^{n} + \frac{16}{3}\mathrm{e}^2b|t| \frac{4\log_2^2R}{R}
\end{equation}

 Each of the three cases leads to a simple bound. For simplicity in these final two paragraphs, we will take the values of $b$ calculated in frustrated models where $\alpha^\prime=\alpha$.   Analogous results hold for other models. As a function of time, we obtain 
 \begin{equation}
\frac{\lVert \mathbb{P}_R \mathrm{e}^{\mathcal{L}t}|A_1)\rVert }{2\lVert |A_1)\rVert}  \le \frac{c_1 t}{\mathcal{R}-c_1|t|} + c_2 \frac{|t|}{\mathcal{R}}
\end{equation}
where \begin{subequations}\begin{align}
\mathcal{R}(R) &= \left\lbrace\begin{array}{ll} R &\ \alpha > 2 \\ R \log^{-2}R &\ \alpha = 2 \\ R^{\alpha-1} &\ 1<\alpha<2 \end{array}\right., \\
c_1 &= b\cdot  \left\lbrace\begin{array}{ll} 16\mathrm{e}^2 (1-2^{-(\alpha-2)/2})^{-2} &\  \alpha > 2 \\ 16\mathrm{e}^2 &\ \alpha = 2 \\  32\mathrm{e}^2 (1-2^{-(2-\alpha)/2})^{-2}  &\ 1<\alpha<2 \end{array}\right., \\
c_2 &= b\cdot \left\lbrace\begin{array}{ll} 2^{2+\alpha}\mathrm{e}^2 (1-2^{-\alpha})^{-1}(1-2^{-(\alpha-2)/2})^{-2} &\  \alpha > 2 \\ \frac{64}{3}\mathrm{e}^2 &\ \alpha = 2 \\  2^{2+\alpha}\mathrm{e}^2 (1-2^{-\alpha})^{-1}  (1-2^{-(2-\alpha)/2})^{-2}  &\ 1<\alpha<2 \end{array}\right., 
\end{align}\end{subequations}and $c_{1,2}$ are O(1) constant.      Now observe that \begin{equation}
\frac{\lVert \mathbb{P}_R \mathrm{e}^{\mathcal{L}t}|A_1)\rVert }{2\lVert |A_1)\rVert}  \le  (2c_1+c_2) \frac{|t|}{\mathcal{R}}, \;\;\;\; \left(|t|<\frac{\mathcal{R}}{2c_1}\right). \label{eq:lastinequality}
\end{equation}

Recall the definition of the scrambling time $t_{\mathrm{s}}^\delta(R)$ from (\ref{eq:scrambling}).   Using Proposition \ref{CommToProj}, we conclude that 
\begin{equation}
\frac{\delta}{2} \le  2 (2c_1+c_2)\frac{t_{\mathrm{s}}^\delta(R)}{\mathcal{R}}
\end{equation}
Since $\delta < 2$, the right hand side becomes larger than 1 before the inequality (\ref{eq:lastinequality}) breaks down.  Since $\frac{1}{2}R\ge r\ge R$, we conclude that \begin{equation}
t_{\mathrm{s}}^\delta(r) \ge\frac{\delta}{2 b}\cdot \left\lbrace \begin{array}{ll}\dfrac{(1-2^{-\alpha})(1-2^{-(2-\alpha)/2})^2)}{(32+2^{2+\alpha})\mathrm{e}^2} r &\  \alpha > 2 \\ \dfrac{3}{160\mathrm{e}^2}  \dfrac{r}{\log^2 r}&\ \alpha = 2 \\  \displaystyle \dfrac{(1-2^{-\alpha})(1-2^{-(2-\alpha)/2})^2)}{(64+2^{2+\alpha})\mathrm{e}^2}  r^{\alpha-1} &\ 1<\alpha<2   \end{array}\right..
\end{equation}
This proves the main Theorem 1.
 \hfill \qedsymbol
\end{document}